\newcommand{\sigmax}[1]{\sigma^X\left[#1\right]}
\newcommand{\sigmay}[1]{\sigma^Y\left[#1\right]}
\newcommand{\sigmaz}[1]{\sigma^Z\left[#1\right]}
\newcommand{\sigmaX}{\sigma^X}
\newcommand{\Gin}{\mathcal{G}_{\text{in}}}
\newcommand{\Gch}{\mathcal{G}_{\text{ch.}}}
\newcommand{\Gout}{\mathcal{G}_{\text{out}}}
\newcommand{\Gres}{\mathcal{G}_{\mathcal{R}}}
\newcommand{\cent}{\mathcal{C}}
\newcommand{\chan}{\mathcal{K}}
\newcommand{\supp}[1]{\left| #1 \right|}
\theoremstyle{plain}
\newtheorem{thm}{Theorem}
\theoremstyle{plain}
\newtheorem{lem}{Lemma}
\newtheorem*{cor}{Corollary}
\theoremstyle{definition}
\newtheorem{defn}{Definition}
\theoremstyle{remark}
\newenvironment{proof1}{%
	\proof}{\endproof}
\newenvironment{proof1a}{%
	\proof}{\endproof}
	\newenvironment{proof1b}{%
	\proof}{\endproof}
\newenvironment{proof2}{%
	\proof}{\endproof}
\begin{document}
\title{Universal fault-tolerant measurement-based quantum computation}
\author{Benjamin J. Brown}

\author{Sam Roberts}
\affiliation{Centre for Engineered Quantum Systems, School of Physics, University of Sydney, Sydney, New South Wales 2006, Australia}

\begin{abstract}
Certain physical systems that one might consider for fault-tolerant quantum computing where qubits do not readily interact, for instance photons, are better suited for measurement-based quantum-computational protocols. Here we propose a measurement-based model for universal quantum computation that simulates the braiding and fusion of Majorana modes. To derive our model we develop a general framework that maps any scheme of fault-tolerant quantum computation with stabilizer codes into the measurement-based picture. As such, our framework gives an explicit way of producing fault-tolerant models of universal quantum computation with linear optics using protocols developed using the stabilizer formalism. Given the remarkable fault-tolerant properties that Majorana modes promise, the main example we present offers a robust and resource efficient proposal for photonic quantum computation.
\end{abstract}

\maketitle

\section{Introduction}

 Considerable experimental effort~\cite{Walther05, Prevedel07, Tame07, Yao12,barz2014demonstrating, cai2015entanglement, spagnolo2014experimental, carolan2014experimental, carolan2015universal,  silverstone2015qubit, silverstone2016silicon, heuck2016demand, qiang2018large, adcock2018programmable, steinbrecher2018quantum, sparrow2018simulating, harris2018linear, flamini2018photonic} is being dedicated to the realization of linear optical quantum computation~\cite{Knill01, Rudolph16} due to the extensive coherence times that photonic qubits promise. To support this endeavor, we must design robust models of fault-tolerant quantum computation that minimize the high resource cost that is demanded by their implementation. Unlike the other approaches to realize fault-tolerant quantum computation~\cite{Nigg14, Barends14, Kelly15, Corcoles15, Takita16,  Linke16, Albrecht16, Deng16}, qubits encoded with photons fly through space at the speed of light, and do not readily interact with other photons. As such, we need different theoretical models to describe fault-tolerant quantum computation with photons.

Measurement-based quantum computation~\cite{Briegel01, Raussendorf01, Raussendorf02, Briegel09,nielsen2005fault} provides a natural language to describe computational operations with flying qubits. In this picture we initialize a specific many-body entangled resource state which is commonly known as a cluster state. We then make single-qubit measurements on its qubits to realize computational operations. The performed operation is determined by the choice of measurements we make, and the entangled resource we initially produced.

In the present work, we provide a fault-tolerant implementation of the Clifford operations with an adaptation of the topological cluster-state model~\cite{Raussendorf05}. If supplemented by a non-Clifford gate~\cite{Brown19a}, that one might implement by magic state distillation~\cite{Bravyi05, Campbell16}, these operations are enough to recover universal quantum computation~\cite{Raussendorf07, Raussendorf07a}.
Owing to its high threshold error rates~\cite{Raussendorf07, Raussendorf07a, Fowler09} and simplicity in its design~\cite{Devitt09, Herrera-Marti10, Gimeno-Sergovia15,  Rudolph16}, the topological cluster state is the prototypical model for robust fault-tolerant measurement-based quantum computation~\cite{Raussendorf06}. The number of two-qubit entangling operations we require per physical qubit to realize our gate set are no more demanding than those to realize the standard topological cluster state. We therefore expect that the Clifford gates we propose will maintain the high threshold of the topological cluster state. As such, our model is particularly appealing for experimental realization.

The potential savings available by adopting our robust model of quantum computation are twofold. Firstly, we achieve the Clifford operations by simulating the braids and fusion measurements of Majorana zero modes with the topological cluster-state model. We find these operations by mapping them from analogous operations using stabilizer models that are known to be resource efficient~\cite{Bombin10, Hastings15, Brown17, Yoder17} compared with the original proposal~\cite{Raussendorf06}. Furthermore, unlike our model, the original proposal relied on resource-intensive distillation methods to complete the set of Clifford operations~\cite{Jones12}. We expect additional reductions in resource demands by circumventing the need to perform any distillation operations to perform Clifford gates.

We derive our results by developing a framework that allows us to map any protocol for quantum computation with stabilizer codes~\cite{Gottesman97} onto a measurement-based model. Our mapping is therefore important in its own right as it enables us to import all of the developments made for fault-tolerant quantum computation with static qubits using the stabilizer formalism in the past decades~\cite{Terhal15, Brown16, Campbell17} into a language that is better suited for linear-optical setups. Specifically, we show how to encode an arbitrary stabilizer state with a cluster-state model. We also show how to choose the cluster state and the measurement pattern such that we implement logical operations on the encoded stabilizer code. We accomplish logical operations using a specially chosen quantum measurements to perform code deformations~\cite{Raussendorf06, Bombin09, Paetznick13, Bombin15, Anderson14, Brown17, Vuillot18}. We show that we can encode code deformations within a resource state such that after we measure most of its qubits, the remaining unmeasured qubits of the resource state lie in the state of the deformed stabilizer code, thereby completing a fault-tolerant operation. We also show how we can compose many of these operations, that we will often call channels, to realize longer computations and more complex code deformations.

Since the work of Raussendorf~{\it et al.} it has been shown~\cite{Bolt16} how to map a special class of stabilizer codes, known as Calderbank-Shor Steane(CSS) stabilizer codes~\cite{Calderbank96, Steane96}, onto a measurement-based model. We have extended beyond this work by showing how to perform computations via code deformations on  arbitrary stabilizer codes from the measurement-based perspective. Further, in Ref.~\cite{Herr17b} it was proposed that lattice surgery~\cite{Horsman12, Landahl14} can be mapped onto a measurement-based computational model. As we will discuss, the specific model proposed in Ref.~\cite{Herr17b}, together with any other model of code deformation~\cite{Raussendorf06, Raussendorf07, Raussendorf07a}, can be reproduced using our framework. We finally remark on recent work~\cite{Nickerson18, Newman19} on measurement-based quantum computation where the topological cluster-state model is generalized to find robust codes, and Refs.~\cite{Bombin18, Brown19a} where a new schemes for universal fault-tolerant measurement-based quantum computation are proposed based using three-dimensional codes~\cite{Bombin07, Vasmer18}.

The remainder of this article is organised as follows. We begin by reviewing notation we use to describe quantum error-correcting codes in Sec.~\ref{Sec:StabilizerCodes}. After introducing some basic notation and the concept of foliation, we summarize the results of our paper and give a guide to the reader to parse the different aspects of our model in Sec.~\ref{Sec:FoliationModel}. Then, in Sec.~\ref{Sec:FoliatedQubit} develop a microscopic model for the one-dimensional cluster state as a simple instance of foliation, and we consider parity measurements between separate foliated qubits. In Sec.~\ref{Sec:CodeFoliation} we use the microscopic framework we build to show how a channel system can propagate an input stabilizer code unchanged. We explicitly demonstrate this by foliating the twisted surface code model in Sec.~\ref{Sec:TwistedSurfaceCode}. In Sec.~\ref{Sec:FTQC} we go on to show that we can manipulate input states with a careful choice of channel systems. We demonstrate this by showing we can perform Clifford gates and prepare noisy magic states with the foliated surface code before offering some concluding remarks. Appendices describe extensions of our results and proofs of technical details. Appendix~\ref{App:TypeII} describes an alternative type of foliated qubit, Appendix~\ref{App:ThmProofs} gives proofs of technical theorems we state in the main text, Appendix~\ref{App:SubsystemFoliation} discusses foliated subsystem codes and Appendix~\ref{App:Compressed} gives a generalization of the foliated stabilizer codes we propose.

\section{Quantum error correction}
\label{Sec:StabilizerCodes}

Here we introduce the notion of a subsystem code~\cite{Poulin05} that we use to describe the foliated systems of interest. A subsystem code is a generalization of a stabilizer code~\cite{Calderbank97, Gottesman97} where not all of the logical operators of the code are used to encode logical information. The logical operators for each of the disregarded logical qubits are known as gauge operators. Gauge operators have been shown to be useful for a number of other purposes~\cite{Bacon06, Paetznick13, Terhal15, Brown16, Campbell17}.

A subsystem code is specified by its gauge group, $\mathcal{G} \subseteq \mathcal{P}_n $; a subgroup of the Pauli group acting on $n$ qubits. The Pauli group is generated by Pauli operators $X_j$ and $Z_j$, together with the phase $\text{i}$, where the index $1 \le j \le n$ denotes the code qubit the operator acts on. More precisely we have
\begin{equation}
P_j = \underbrace{\openone \otimes \openone \otimes \dots \otimes \openone}_{j-1}\otimes P \otimes \underbrace{\openone \otimes \dots \otimes \openone}_{n - j},
\end{equation}
where $\openone$ is the two-by-two identity matrix and $P = X, Y, Z$ is a Pauli matrix.

The gauge group describes a code space specified by its stabilizer group 
\begin{equation}
\mathcal{S} \propto \cent(\mathcal{G}) \cap \mathcal{G},
\end{equation} 
where $\mathcal{C}(\mathcal{G})$ denotes the centralizer of a group $\mathcal{G}$ within $\mathcal{P}_n$ which consists of all elements of $\mathcal{P}_n$ that commute with all elements of $\mathcal{G}$. With the stabilizer group defined we specify the code space as the subspace spanned by a basis of state vectors $|\psi \rangle$ where
\begin{equation}
s | \psi \rangle = (+1) | \psi \rangle,
\end{equation}
for all stabilizers $s \in \mathcal{S}$. By definition, the stabilizer group must satisfy $-1  \not\in \mathcal{S}$.

We also consider a generating set of logical operators $\mathcal{L} = \cent(\mathcal{G}) \backslash \mathcal{G}$. The group $\mathcal{L}$ is generated by the logical Pauli operators $\overline{X}_j$, $\overline{Z}_j$ with $1 \le j \le k$, such that $\overline{X}_j$ anti-commutes with $\overline{Z}_l$ if and only if $j = l$. Otherwise all generators of the group of logical Pauli operators commute with one another.

The logical operators generate rotations within the code space of the stabilizer code. We will frequently make use of the fact that logical operators $L,\,L' \in \mathcal{L} $ such that $L' = s L$ with $s\in \mathcal{S}$ have an equivalent action on the code space. This follows from the definitions given above. We thus use the symbol `$\cdot \sim \cdot $' to denote that two operators are equivalent up to multiplication by a stabilizer operator. For instance, with the given example we can write $L' \sim L$.

It is finally worth noting that the special Abelian subclass of subsystem codes, namely stabilizer codes~\cite{Calderbank97, Gottesman97}, are such that $\mathcal{S} = \mathcal{G}$ up to phases.

\subsection{Transformations and compositions of codes}

It will be important to make unitary maps between subsystem codes. Given the generating set of two different stabilizer codes $\mathcal{R}$ and $\mathcal{S}$ with elements $r \in \mathcal{R}$ and $s\in \mathcal{S}$, the stabilizer group $\mathcal{T} = \mathcal{R} \otimes \mathcal{S}$ is generated by elements $r \otimes 1 ,\, 1 \otimes s \in \mathcal{T}$. We also use the shorthand $\mathcal{T} = U \mathcal{S}$ to define the stabilizer group
\begin{equation}
\mathcal{T} = \left\{ U s U^\dagger ~:~ s \in \mathcal{S}  \right\},
\end{equation}
where $U$ is a Clifford operator. Of course, the commutation relations between two Pauli operators are invariant under conjugation by a unitary operator.

\subsection{Expressing Pauli operators as vectors}

For situations where one is willing to neglect the phases of elements of the stabilizer group, it is common to write elements of the Pauli group as vectors of a $2n$-dimensional vector space over a binary field with a symplectic form that captures their commutation relations~\cite{Calderbank97, Haah12}.
We express Pauli operators in vector notation such that $p = (p^X \, p^Z )^T$ where $p^X$($p^Z$) are vectors from the $n$-dimensional vector space over the binary field $\mathbb{Z}_2$ and the superscript $T$ denotes the transpose of the vector such that, up to phases, the Pauli operator $P \in \mathcal{P}_n$ is expressed $P = \prod_{j=1}^n X_j^{p^X_j} Z_j^{p^Z_j}$.

We will frequently move between Pauli operators and the vector notation. It is thus helpful to define the function $v : \mathcal{P}_n \rightarrow \mathbb{Z}_2^{2n}$ such that
\begin{equation}
v(P) \equiv \left( p^X \, p^Z \right)^T, \label{Eqn:Vectorise}
\end{equation}
where vectors $p^X,\, p^Z \in \mathbb{Z}_2^n$ are such that
\begin{equation}
P \propto \prod_{j \in \supp{p^X}}X_j \prod_{j \in \supp{p^Z}}Z_j,
\end{equation}
up to a phase factor and we have defined the support of vector $p$, denoted $\supp{p}$, as the set of elements of $ p $ that are non-zero.

Using this notation, we additionally have the symplectic form where, for two vectors $p$ and $q$ specifying two elements of the Pauli group $\mathcal{P}_n$, we have
\begin{equation}
\Upsilon(p,q) \equiv p^T \lambda_n q, \text{ with } \lambda_n = \left( \begin{array}{cc} 0 & \openone_n \\ \openone_n & 0 \end{array} \right), \label{Eqn:Upsilon}
\end{equation}
where addition is taken modulo 2 and $\openone_n$ is the $n \times n$ identity matrix such that $\Upsilon(p,q) = 0$ if and only if the Pauli operators specified by $p$ and $q$ commute. We also define the inner product
\begin{equation}
p \cdot q \equiv \sum_j p_j q_j,
\end{equation}
where addition is taken modulo 2.

\section{Foliation}
\label{Sec:FoliationModel}

Quantum computation proceeds by using a series of channels where each channel maps its input nontrivially onto some output state~\cite{Raussendorf03}. These channels are commonly known as gates, and with an appropriate composition of said channels we can realize non-trivial quantum algorithms. In this work we build a generic model that takes an arbitrary fault-tolerant channel that is based on stabilizer codes, and provides a measurement-based protocol that performs the same function.

Channels are readily composed by unifying the output of the last with the input of the next, so here we focus on developing a model of a single channel with a single designated function. Most important is that the channel is tolerant to errors, and as such we show that we can propagate quantum error-correcting codes through a channel in a fault-tolerant manner. The process of constructing such channels -- which are comprised of a resource state and measurement pattern -- is known as foliation. Here, the term foliation refers to the layered like structure of the resource state that is constructed for each channel which arises from an underlying error correcting code. While a foliated system is relatively straight forward to understand in comparison to its analogous circuit-based counterpart, its microscopic details can become quite obtuse without subdividing the system into constituent parts that depend on their function. 

In what follows is a macroscopic overview for the model of foliation we consider with some description of the function of each part. We conclude our overview with a reader's guide which outlines which subdivisions of the total system each section addresses. Nonetheless, the reader should bear the macroscopic structure presented in this section in mind throughout our exposition.

\subsection{The model}

We look to build a foliated system, denoted $\mathcal{F}$. The channel consists of two components; a resource state, $\mathcal{R}$, and a measurement pattern, $\mathcal{M}$, that propagates the input state that is encoded within $\mathcal{R}$ onto the output system. The union of both $\mathcal{M}$ and $\mathcal{R}$ can be regarded as a generating set of a subsystem code. 

The resource state is a specially-prepared many-body entangled state known as a graph-state that we describe in more detail shortly. The measurement pattern is a list of single-qubit measurements that are performed on the physical qubits not included in the output system. The measurement pattern is chosen specifically to move the input state through the resource onto the output system up to the data collected from the single-qubit measurements. In addition to this, the measurement data obtained is used to determine the physical qubits that have experienced errors during the preparation of the resource state or during the readout process.

Abstractly, we can regard the foliation as a gauge-fixing procedure~\cite{Paetznick13, Anderson14, Bombin15, Vuillot18} where the foliated system is described by the gauge group
\begin{equation}
\mathcal{F} = \mathcal{R} \cup \mathcal{M}, \label{Eqn:Foliated}
\end{equation}
where we use the symbol `$\cdot \cup \cdot$' to denote the group generated by elements of both $\mathcal{R}$ and $\mathcal{M}$. Specifically, the union symbol is the free product between two groups. Foliation then is the procedure of preparing the system $\mathcal{R}$ and subsequently fixing the gauge of $\mathcal{F}$ by measuring $\mathcal{M}$. The preparation of $\mathcal{R}$ and choice of $\mathcal{M}$ determines the action of the channel, and the data we obtain to identify the locations of errors.

Advantageous to the model we present is that, provided we have a channel that is consistent with some chosen stabilizer code in a sense we make precise shortly, we need only know the input code to determine the logical effect of the channel on the output system. The microscopic model can be used to test and compare the tolerance that different foliated systems have to physical faults.

We next briefly elaborate on the resource state $\mathcal{R}$. The resource state can be decomposed into an entangled channel system $\chan$ and ancilla qubits $\mathcal{A}$ together with a unitary operator $U^A$ which couples the ancilla qubits to the channel
\begin{equation}
\mathcal{R} = U^A \left( \chan \otimes \mathcal{A} \right). \label{Eqn:Resource}
\end{equation}
The unitary $U^A$ takes the form of a circuit of controlled-$Z$ operations. Supposing that $\mathcal{R} = \chan$, the resource state will propagate the input system onto the output system provided no errors have occurred before the measurements have been performed and interpreted. The channel system accomplishes this function using a series of one-dimensional cluster states that propagate the input system onto its output via single-qubit measurements.

The resource state also includes an ancilla system. Once coupled to the channel, the ancilla qubits perform two roles upon measurement. First of all they are included to perform check measurements on the channel system to identify errors that may be introduced to the physical qubits. Their second role is to modify the input system through the channel as it progresses to the output system. This modification is analogous to code deformation in the more familiar circuit-based model of quantum computation~\cite{Bombin09, Fowler09, Fowler12a, Brown17}. Later we will investigate how different choices of channel affect the transformation made on the input state.

\subsection{A guide for the reader}

The following exposition follows a number of avenues to describe both the details of a foliated systems at the microscopic level as well as their function at the macroscopic level, but ultimately all the sections have the model presented above in common. We thus provide a guide to help explain the aspects of the fault-tolerant measurement-based model we build over the course of this article.

\begin{figure}
\includegraphics{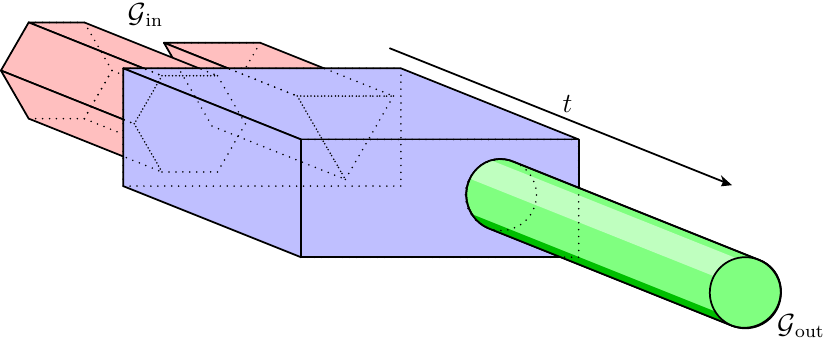}
\caption{\label{Fig:Composition} Three resource states shown in red, blue and green are composed to map an input state onto the output state under a mapping that is determined by the different resource states. A temporal direction can be assigned such that the input system is input at the initial time and the output system emerges at the final instance.}
\end{figure}

The main results are stated most abstractly in Sec.~\ref{Sec:CodeFoliation}. Here we describe the action a foliated system will have on an input code $\mathcal{G}_\text{in}$ which is manifest at the output system $\mathcal{G}_\text{out}$ once the data from the single-qubit measurement pattern has been collected and processed. The section also describes the check operators $\mathcal{G}_\mathcal{R}$ that are collected at the microscopic level of the foliated system that are used to identify the locations of errors. The measurements $\mathcal{G}_{\mathcal{R}}$ generate the stabilizer code $\mathcal{G}_{\text{ch.}}$. These measurements are important for the the error-correction procedure for the foliated system. The choice of $\mathcal{G}_{\mathcal{R}}$ also determines the action the foliated system has on the input code. We summarize the notation we use to describe different components of the foliated system as we describe it in Table.~\ref{Tab:Notation}.

\begin{table}
\begin{tabular}{c  p{2.84in}}
Symbol & Definition \\
\hline \hline
$\mathcal{F}$ & Gauge group of the foliated system $\mathcal{F} = \mathcal{R} \cup \mathcal{M}$. \\
$\mathcal{M}$ & Single-qubit Pauli measurement pattern. \\
$\mathcal{R}$ & Resource state whereby $\mathcal{R} = U^A \left( \mathcal{K} \otimes \mathcal{A} \right) $. \\
$\mathcal{K}$ & Series of linear cluster states that transmit $\mathcal{G}_\text{in}$.\\
$\mathcal{A}$ & Ancilla qubits that are used to identify errors in $\mathcal{R}$. \\
$U^A$ & Unitary that couples $\mathcal{K}$ and $\mathcal{A}$ to give $\mathcal{R}$. \\
$\mathcal{G}_\text{in}$ & Input stabilizer code into the resource state via $\mathcal{K}$.  \\
$\mathcal{G}_\text{out}$ & Output code of foliation obtained by measuring $\mathcal{M}$. \\
$\mathcal{G}_\mathcal{R}$ & Measurements the foliated system performs on $\mathcal{G}_{\text{in}}$.  \\
$\mathcal{G}_\text{ch.}$ & Stabilizer group generated by measurements $\mathcal{G}_{\mathcal{R}}$. \\ 
\end{tabular}
\caption{\label{Tab:Notation} A summary of the notation we use to describe the different components of the foliated system.}
\end{table}

In order to understand the details of Sec.~\ref{Sec:CodeFoliation} we must first examine closely the one-dimensional cluster state that propagates qubits through an entangled system of physical qubits. In Sec.~\ref{Sec:FoliatedQubit} we study the one-dimensional system and we build notation to describe parity measurements between qubits propagated through a series of one-dimensional cluster-states. These one-dimensional cluster states, and the parity measurements we conduct between them, make up the measurement-based channel for a quantum error-correcting code. The details presented in this section are necessary to understand the technical aspects of the theorems we give in Sec.~\ref{Sec:CodeFoliation}.

Beyond Sec.~\ref{Sec:CodeFoliation} we look at specific instances of the general theory we develop. Indeed, in Sec.~\ref{Sec:TwistedSurfaceCode} we examine the microscopic details of the foliated variant of the twisted surface code. In Sec.~\ref{Sec:FTQC} we show how we can compose different channels to realize fault-tolerant measurement-based quantum computation using the example of surface code quantum computation.

In Fig.~\ref{Fig:Composition} we show a schematic diagram of how fault-tolerant measurement-based quantum computation will proceed. The figure shows the composition of three channels, each of which will perform a different operation. The system is measured such that the logical data is mapped from the input system onto the output system, under an operation that is determined by the choice of different resource states.

\section{Foliated qubits}
\label{Sec:FoliatedQubit}

Here we look at a one-dimensional cluster state which propagates a single qubit along a chain, or `wire', of locally entangled qubits via single-qubit measurements. We also show how to perform parity measurements between several chains using additional ancilla qubits. Both of these components are essential to the fault-tolerant models we will propose in later sections.

\subsection{The one-dimensional cluster state}
\label{Subsec:OneDcluster}

The cluster state is readily described using the stabilizer formalism introduced above. We will denote Pauli matrices acting on the physical qubits, indexed $\mu$, with operators $\sigmax{\mu}$, $\sigmay{\mu}$ and $\sigmaz{\mu}$. This will discriminate the Pauli matrices that act on the physical qubits from those that act on the code qubits of the input and output quantum error-correcting code. Similarly, the logical operators of the cluster state are denoted $X$, $Y$ and $Z$ without the bar notation for the same reason. The logical qubits of the one-dimensional cluster-state wire will become the code qubits of foliated stabilizer codes as this discussion progresses. To this end, wherever there is ambiguity, we will refer to the qubits that lie in a cluster state as `physical qubits'. These qubits are not to be confused with the `code qubits' of a quantum error-correcting code.

To describe the cluster state we first consider the initial product state $ |\psi \rangle _1 | + \rangle _2 | + \rangle_3 \dots | + \rangle _N $ where $ | \psi \rangle $ is an arbitrary single-qubit state, the states $| \pm \rangle$ are eigenstates of the Pauli-X matrix, i.e., $ \sigmaX | \pm \rangle = (\pm 1) | \pm \rangle$. We can express this state with the stabilizer group $\mathcal{I} = \left \langle \{ \sigmax{\mu} \}_{\mu = 2}^N \right \rangle$ whose logical operators are $X = \sigmax{1}$ and $Z = \sigmaz{1}$. The cluster state, whose stabilizer group we denote as $\chan = U \mathcal{I}$, with $U$ defined as
\begin{equation}
U = \prod_{\mu=1}^{N-1} U^Z[\mu,\mu+1], \label{Eqn:ChainEntanglingOperation}
\end{equation}
where $U^Z[\mu,\nu] = (1 + \sigmaz{\mu} + \sigmaz{\nu} - \sigmaz{\mu} \sigmaz{\nu})/2$ is the controlled-phase gate. The unitary operator $U$ couples nearest-neighbour pairs of physical qubits along the open chain. 

The following facts about the controlled-phase gate are helpful throughout our exposition. We firstly note that operators $U^Z[\mu, \nu]$ and $\sigmaz{\rho}$ commute for an arbitrary choice of $\mu, \,\nu$ and $\rho$. Moreover, $U^Z[\mu, \nu]$ and $U^Z[\rho, \lambda]$ commute for any $\mu$, $\nu$, $\rho$ and $\lambda$. Further, they satisfy the relationship
\begin{equation}
U^Z[\mu, \nu] \sigmax{\mu} U^Z[\mu, \nu] = \sigmax{\mu} \sigmaz{\nu}. \label{Eqn:CZPauliX}
\end{equation}
We also have that $U^Z[\mu ,\nu] = U^Z[\nu, \mu]$ by definition.

With the above definitions it is readily checked that $\chan$ is generated by stabilizer operators
\begin{equation}
C[\mu] = \sigmaz{\mu-1} \sigmax{\mu} \sigmaz{\mu+1}, \label{Eqn:ClusterStab1}
\end{equation}
for $ 2 \le \mu \le N-1$, along with 
\begin{equation}
C[N] = \sigmaz{N-1} \sigmax{N}.  \label{Eqn:ClusterStab2}
\end{equation} 
The logical operators that act on the encoded qubit are
\begin{equation}
X = \sigmax{1} \sigmaz{2}, \quad Z = \sigmaz{1}. \label{Eqn:ClusterLogicals}
\end{equation}
We show examples of a stabilizer operator, and the logical Pauli-X and Pauli-Z operator in Fig.~\ref{Fig:OneDcluster}. As is common when describing cluster states, we use a graphical notation to describe the resource state of interest. In particular, pairs of qubits that are coupled via a controlled-phase gate are connected by an edge of a graph where each qubit is represented by vertex.

\begin{figure}
\includegraphics{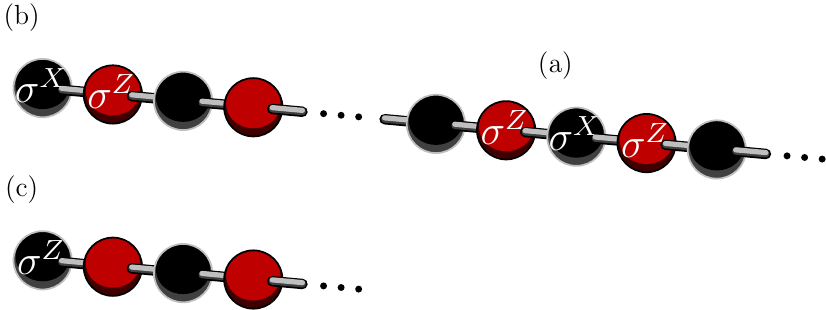}
\caption{\label{Fig:OneDcluster} The one-dimensional cluster state. Physical qubits are shown as vertices ordered along a line where the left-most vertex represents the first qubit of the system and edges indicated pairs of vertices that are entangled via a controlled-phase gate. (a)~A stabilizer operator denoted $C[\mu]$ shown at an arbitrary point along the lattice. (b)~and (c)~show, respectively, representatives of the logical Pauli-X and Pauli-Z operators once the cluster state is initialized.}
\end{figure}

The goal of the one-dimensional cluster state is to transport the logical information along the chain onto the last qubit, indexed $N$. To do so we make single qubit measurements on all of the qubits except qubit $N$. We study two different types of foliated qubits. The first, which is already well understood in the literature, transmits information by measuring the physical qubits in the Pauli-X basis. The second, which does not appear in the literature to the best of our knowledge, moves information using Pauli-Y measurements. We refer to these two foliated qubits as type-I and type-II foliated qubits respectively. While we find that using type-I qubits is sufficient to realize any foliated system of interest within the scope we set here, we believe that there may be practical advantages to be gleaned using type-II qubits in a foliated scheme. As such we describe type-II foliated qubits in App.~\ref{App:TypeII} and we discuss their potential applications throughout our exposition. For simplicity though we will by and large focus on type-I foliated qubits in the main text.

\subsection{Measurement-based qubit transmission}

We review here a foliated qubit where the physical qubits of the entangled chain are measured in the Pauli-X basis. The stabilizer group of the chain $\chan$ is defined above. We first look at the action of measuring the first physical qubit of the system. In particular we are interested in the action of the measurement on the logical operators. This action is easily understood by finding logical operators that commute with the measurement. We find logical operators that commute with the measurement operator $M_1 = \sigmax{1}$ by multiplying the logical operators by stabilizer operators.  We find that the logical operator
\begin{equation}
Z \sim \sigmax{2} \sigmaz{3},
\end{equation}
commutes with $\sigmax{1}$. Similarly $X$, as defined in Eqn.~(\ref{Eqn:ClusterLogicals}), commutes with the measurement operator $M_1$. After making the measurement we project the code such that we have a new stabilizer $C[1] \in \chan$ where $C[1] \equiv x_1M_1 = x_1\sigmax{1}$, where $x_1$ is the measurement outcome of $M_1$. Multiplying $X$ by $C_1$ we have
\begin{equation}
X \sim x_1 \sigmaz{2}. 
\end{equation} 
The stabilizer $C[2]$ is removed from $\chan$ as it anti-commutes with $M_1$. Having accounted for the measurement outcome we can also disregard the first qubit of the chain since the projective measurement has disentangled it from the rest of the system. 

\begin{figure}
\includegraphics{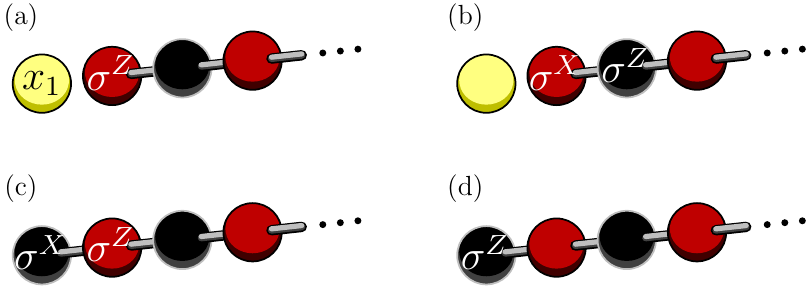}
\caption{\label{Fig:TypeIIlogical} The logical operators of the one-dimensional cluster state after and before the first qubit is measured in the Pauli-X basis. In (a)~we show the logical Pauli-X operator and in (b)~we show the logical Pauli-Z operator after the first qubit is measured, and the measurement outcome $x_1$ is returned. Importantly, after the measurement the logical Pauli-X operator is a single-qubit Pauli-Z operator on the second qubit, whereas the logical Pauli-Z operator is supported on two physical qubits. In contrast, before the measurement the logical Pauli-Z operator is supported on a single qubit, where as the logical Pauli-X operator is a weight-two operator. We show the logical Pauli-X and logical Pauli-Z operator before the measurement in (c)~and (d),~respectively for comparison. These are the same operators as those shown in Fig.~\ref{Fig:OneDcluster}(b) and~(c), respectively.}
\end{figure}

It is important to note that the logical operator $X$, up to the sign determined by the measurement outcome, is now a single-qubit Pauli-Z operator acting on the second qubit. In contrast, before the measurement, the logical operator $Z $ that was a single-qubit Pauli-Z operator acting on the first qubit has now become a weight-two operator acting on the second and third physical qubits along the chain. We show the logical Pauli-X and Pauli-Z operators after the first qubit has been measured in Figs.~\ref{Fig:TypeIIlogical}(a) and (b). We compare these logical operators with the same logical operators before the measurement has taken place in Figs.~\ref{Fig:TypeIIlogical}(c) and~(d), respectively.

The advantage of measurement-based quantum computation lies in the fact that once the resource state is prepared, quantum information can be transmitted and processed by performing single-qubit measurements on qubits of the resource state. Supposing then that we can only make single-qubit measurements, we see that we can measure the logical operator $Z$ with the single-qubit measurement $\sigmaz{1}$ on the first qubit. Conversely, to infer the value of $X$, we measure the first qubit in the Pauli-X basis, and the second qubit in the computational basis. We thus see that we can learn either the logical Pauli-X or Pauli-Z information from the cluster state by measuring the appropriate physical qubit in the Pauli-Z basis, and the qubits that preceded it in the Pauli-X basis. Alternatively, were we to measure both qubit 1 and qubit 2 in the Pauli-X basis, we would have moved the logical information along the chain without having inferred any logical data.

To find a general expression for the logical operators, suppose we have measured the first $\tau-1$ physical qubits in the Pauli-X basis which returned outcomes $x_\mu = \pm 1$ for $\mu  < \tau$. We then have that
\begin{equation}
X \sim \left( \prod_{\mu = 1}^{\tau/2} x_{2\mu -1} \right) \sigmaz{\tau} , 
\end{equation} 
\begin{equation}
Z \sim \left( \prod_{\mu = 1}^{\tau/2} x_{2\mu} \right) \sigmax{\tau} \sigmaz{\tau+1},
\end{equation}
for even $\tau$ and
\begin{equation}
X \sim \left( \prod_{\mu = 1}^{(\tau-1)/2} x_{2\mu -1} \right) \sigmax{\tau} \sigmaz{\tau+1} ,
\end{equation} 
\begin{equation}
Z \sim  \left( \prod_{\mu = 1}^{(\tau-1)/2} x_{2\mu} \right) \sigmaz{\tau},
\end{equation}
where $\tau$ is odd. With this, we see that we can learn the logical Pauli-X information by making a single-qubit Pauli-Z measurement on a site where $\tau$ is even given that we have the outcomes of physical Pauli-X measurements made on the first $\tau-1$ qubits. Likewise we can learn the logical Pauli-Z information by making a physical Pauli-Z measurement on a site where where $\tau$ is odd, provided we have measured all of the qubits with $\mu < \tau$ in the Pauli-X basis. To this end, we find that it is particularly convenient to use a new notation to index the qubits of the chain. We define
\begin{equation}
X(t) = 2t, \quad Z(t) = 2t-1, \label{Eqn:Indices}
\end{equation}
such that now we can rewrite the logical operators of the system such that
\begin{equation}
X \sim \Sigma^X(t) \sigmaz{X(t)}, \quad Z \sim  \Sigma^Z(t) \sigmaz{Z(t)},
\end{equation}
for any $t$, where the operators
\begin{equation}
\Sigma^X(t) \equiv \prod_{\mu = 1} ^t \sigmax{Z(\mu)},\quad \Sigma^Z(t) \equiv \prod_{\mu = 1} ^{t-1} \sigmax{X(\mu)}. \label{Eqn:SigmaTypeI}
\end{equation}
Importantly, the operators $\Sigma^X(t)$ and $\Sigma^Z(t)$ are the tensor product of the Pauli-X matrix. This means the eigenvalues of these operators are inferred from the single-qubit measurement pattern if all the qubits along the chain are measured in the Pauli-X basis as is the case for type-I foliated qubits. 
 
The above redefinition of indices is such that at each `time' interval, indexed by $t$, we can recover either the logical Pauli-X or Pauli-Z information from the chain with a single-qubit Pauli-Z measurement provided the previous qubits along the chain have been measured in the Pauli-X basis. Specifically, each interval contains two adjacent qubits of the chain, the first, which lies at an odd site $ \tau = 2t - 1$, gives access to the Pauli-Z information via a single-qubit measurement, and at every second site, $\tau= 2t$, we can learn logical Pauli-X information with a single-qubit Pauli-Z measurement. We show the logical operators at a given time interval in Fig.~\ref{Fig:TypeIinterval}.

\begin{figure}
\includegraphics{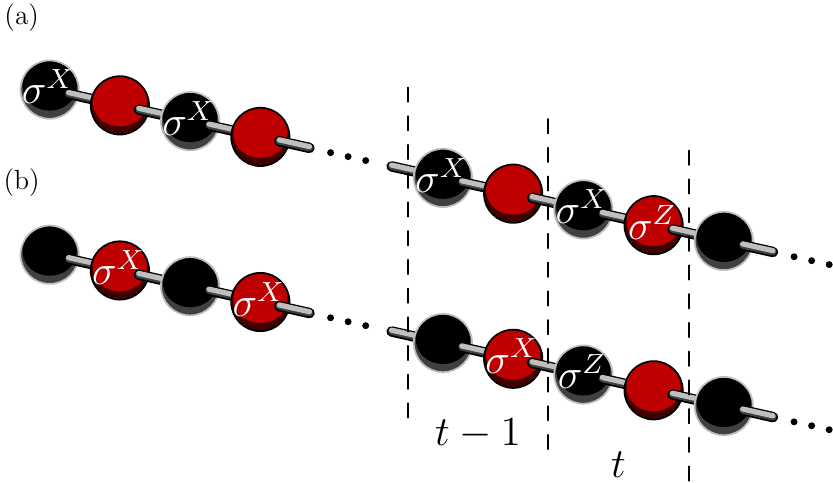}
\caption{\label{Fig:TypeIinterval} (a)~Logical Pauli-X and (b)~logical Pauli-Z operators at time interval $t$. Each time interval contains one black vertex and one red vertex. Local to each time interval we can adjust our single-qubit measurement pattern by measuring the appropriate qubit in the Pauli-Z operator to measure either the logical Pauli-X or Pauli-Z operator given that the preceeding qubits of the system are measured in the Pauli-X basis.}
\end{figure}

\subsection{Measurements using ancilla}

We have thus far imagined replacing the single-qubit Pauli-X measurement on some appropriately chosen qubit with a Pauli-Z measurement to perform logical measurements on the propagated information. To develop foliated codes further we will require the ability to perform logical measurements without adapting the measurement pattern of the foliated qubits. Instead we couple extra ancilla qubits to the system to learn logical information at a given time interval.

To show how to make a logical measurement of a one-dimensional cluster state with an ancilla we consider the resource state $\mathcal{R} = U^A \left( \chan \otimes \mathcal{A} \right)$, as we have introduced in Eqn.~(\ref{Eqn:Resource}), where the channel $\chan$ is the one-dimensional cluster state described above and $\mathcal{A} = \left\langle \sigmax{a} \right\rangle$ describes the stabilizer group of a single ancilla qubit prepared in the $|+\rangle_a$ state. We couple the ancilla to $\chan$ with unitary $U^A$ which we specify shortly. We use the elements of $\mathcal{A}$ to measure logical information from $\chan$.

We must specify a measurement pattern to carry out the propagation of information, as well as the single-qubit measurement we make on the ancilla to learn logical information from the resource state. We write the pattern of measurements 
\begin{equation}
\mathcal{M} =\mathcal{M}^C \cup \mathcal{M}^A,
\end{equation} 
where $\mathcal{M}^C$($\mathcal{M}^A$) describes the measurements made on the qubits of subsystem $\chan$($\mathcal{A}$). For the case of type-I foliated qubits discussed previously we have $\mathcal{M}^C = \left\langle \left\{ \sigmax{\mu} \right\} _{\mu=1}^{N-1}\right\rangle$. After the measurements are performed the resulting quantum information is maintained on the output qubit which is the last qubit of the chain.

Logical operators of the resource state, $ P \in \mathcal{L} = \cent(\mathcal{R}) \backslash \mathcal{R}$, are measured if $P \sim P' \in \mathcal{M}$. For now we choose $\mathcal{M}^A = \left\langle \sigmax{a}\right\rangle$ to this end. In this example we couple the ancilla to the target qubit indexed $T = P(t)$ with unitary $U^A = U^Z[T,a] $ to perform a logical measurement $P \in \mathcal{L}$ where $P = X,\, Z$ is a logical Pauli operator.

We check that $P' \in \mathcal{M}$ by studying the stabilizer group of the resource state $\mathcal{R}$. The state has stabilizers
\begin{equation}
C[a] = \sigmax{a} \sigmaz{T},
\end{equation}
and
\begin{equation}
C[T] =  \sigmaz{T-1} \sigmax{T} \sigmaz{T+1} \sigmaz{a}.
\end{equation}

Now, using that $P \sim P' = \Sigma^P(t) \sigmaz{P(t)}$ we have 
\begin{equation}
P \sim P' C[a] = \Sigma^P(t) \sigmax{a} \in \mathcal{M},
\end{equation} 
provided $\mathcal{M}^A = \left\langle \sigmax{a} \right\rangle$ as prescribed, thus giving the desired logical measurement at time interval $t$ once the resource state is measured with measurement pattern $\mathcal{M}$.

\begin{figure}
\includegraphics{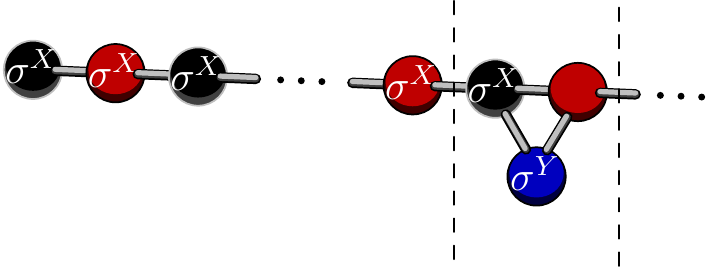}
\caption{\label{Fig:TypeIPauliY} The logical Pauli-Y operator on a type-I foliated qubit where the blue ancilla qubit is coupled both to qubits $Z(t)$ and $X(t)$ of the chain. The logical operator commutes with the measurement pattern where all of the qubits along the chain are measured in the Pauli-X basis, and the ancilla qubit is measured in the Pauli-Y basis, thus allowing us to recover Pauli-Y data from the foliated qubit.}
\end{figure}

We additionally find that we can measure the logical Pauli-Y information from a type-I foliated qubit using an ancilla-assisted measurement. We achieve this by coupling the ancilla to multiple target qubits. To show this we continue with the resource state model $\mathcal{R}$ given in Eqn.~(\ref{Eqn:Resource}) where again we have $\mathcal{A} = \left\langle \sigmax{a} \right\rangle$ and $\chan$ is a one-dimensional cluster state. The chain is measured as a type-I foliated qubit such that $\mathcal{M}^C =  \left\langle \left\{ \sigmax{\mu} \right\} _{\mu=1}^{N-1}\right\rangle$, and we couple the ancialla to the chain with the unitary $U^A = U^Z[Z(t),a] \times U^Z[X(t),a]$. We find that $Y \sim Y' \in \mathcal{M}$ provided $\mathcal{M}^A = \left\langle \sigmay{a}\right \rangle $ as we show below.

From the discussion given, we have that the operator $ \text{i} \left( \Sigma^X(t) \sigmaz{X(t)} \right) \left(\Sigma^Z(t) \sigmaz{Z(t)}\right)$ is a representative of the logical Pauli-Y operator for $\chan$. It is then readily checked then that
\begin{equation} 
 \text{i} X Z \sim Y' = \text{i} \left( \Sigma^X(t) \sigmaz{X(t)} \right) \left(\Sigma^Z(t) \sigmaz{Z(t)}\right) \sigmaz{a},
\end{equation}
is a representative of the logical Pauli-Y operator of $\mathcal{R}$ using the expressions given in Eqns.~(\ref{Eqn:CZPauliX}) and~(\ref{Eqn:SigmaTypeI}). Then using the stabilizer $C[a] \in \mathcal{R} $ where, 
\begin{equation}
C[a] = \sigmax{a} \sigmaz{Z(t)}\sigmaz{X(t)}, 
\end{equation}
we find that
$ \text{i} X Z  \sim Y' C[a] \in \mathcal{M}$
provided $\mathcal{M}^A = \left\langle \sigmay{a} \right\rangle$ which is depicted in Fig.~\ref{Fig:TypeIPauliY}, thus showing we can infer the logical Pauli-Y data of a type-I foliated qubit from $\mathcal{M}$ by coupling to multiple targets. We finally remark that one can show that the resource state with $U^A = U^Z[T_1,a] \times U^Z[T_2,a]$ where $T_1 = X(t_1)$ and $T_2 = Z(t_2)$ such that $t_1 \not= t_2$, we also recover encoded Pauli-Y information. We do not require this degree of generality here so we leave the proof of this fact as an exercise to the reader.

We conclude this subsection by summarising the differences between type-I foliated qubits considered here and the type-II qubits discussed in App.~\ref{App:TypeII}. Indeed, here we have shown that we can make a Pauli-Y measurement with a foliated qubit by coupling an ancilla to two target qubits. In contrast, following an argument similar to that given above, by measuring the physical qubits of a foliated chain in the Pauli-Y basis instead of the Pauli-X basis we find that we can measure the Pauli-Y operator by coupling an ancilla to a single qubit. This comes at the expense of including three qubits at each time interval instead of two, as is the case with type-I foliated qubits. As such, the physicist that is looking to perform a measurement-based experiment that demands a significant number of Pauli-Y measurements should decide carefully whether type-I or type-II foliated qubits are the most appropriate. Ultimately, the decision should depend on whether preparing physical qubits, or entangling interactions between pairs of physical qubits, is the most precious commodity of a given laboratory.

\subsection{Parity measurements with foliated qubits}
\label{Subsec:ParityMeasurement}

In general it will be necessary to make parity measurements between several foliated qubits that are encoded on different chains. We now specify the channel of a resource state consisting of several foliated qubits, together with an ancilla system that we use to make logical measurements. More precisely we consider the resource state $\mathcal{R}  = U^A \left(\chan \otimes \mathcal{A} \right), $ of $n$ foliated qubits. As in the previous section, for now we consider a single ancilla prepared in a known eigenstate of the Pauli-X basis. The channel of the resource state is such that
\begin{equation}
\chan = \bigotimes_{j=1}^n \chan_j, \label{Eqn:MultiQubitChannel}
\end{equation}
and $\chan_j = U_j \mathcal{I}_j$ is the stabilizer group of the $j$-th one-dimensional cluster state of length $N_j$ as defined in Subsec.~\ref{Subsec:OneDcluster}. It will also become helpful later on to define the unitary operator that entangles the initial state to give the channel system, namely
\begin{equation}
U^C = \bigotimes_{j=1}^n U_j, \label{Eqn:ChannelCoupling}
\end{equation}
where $U_j = \prod_{\mu=1}^{N-1} U_j^Z[\mu,\mu+1]$ and $U_j^Z[\mu, \mu+1]$ denotes the controlled-phase gate acting on qubits $\mu $ and $\mu+1$ on the $j$-th chain of qubits. This will be helpful when we consider variations on the initial state $\mathcal{I} = \bigotimes_{j=1}^n \mathcal{I}_j$. One could choose the length of each cluster arbitrarily but for simplicity we suppose that all the type-I chains have an equal length $N_j =2D+1$ where $D$ is the number of time intervals and we include an additional qubit at the end of the chain to support the output of each foliated qubit.

The logical operators of $\chan_j$ are denoted $X_j$ and $Z_j$, and its physical qubits are indexed $X_j(t)$ and $Z_j(t)$ according to Eqn.~(\ref{Eqn:Indices}) where indices have been appended. Again, we have logical operators $P_j = X_j,\, Z_j$ such that
\begin{equation}
P_j \sim \Sigma_j^P(t)\sigmaz{P_j(t)}, \label{Eqn:ChannelLogical}
\end{equation}
where the stabilizer equivalence relation is taken with respect to the stabilizer group $\chan$. For now we consider the ancilla system that includes only a single ancilla qubit, i.e., $\mathcal{A}= \left\langle \sigmax{a} \right\rangle $.

The measurement pattern for the foliated system $\mathcal{M} = \mathcal{M}^C\otimes \mathcal{M}^A$ is specified
\begin{equation}
\mathcal{M}^C = \bigotimes_{j = 1} ^n \mathcal{M}^C_j,
\end{equation} 
where $\mathcal{M}^C_j$ is the set of single-qubit measurements acting for the cluster $\chan_j$. We have that $\mathcal{M}^C_j = \left\langle \left\{ \sigmax{\mu} \right\}_{\mu = 1}^{N_j-1} \right\rangle$ for type-I foliated qubits. We determine the measurements we make on the ancilla system, $\mathcal{M}^A$, depending on the choice of parity measurement.

We look to prepare $\mathcal{R}$ such that we measure the logical Pauli operator $ P  \in \mathcal{P}_n$ by measuring the ancilla qubit in an appropriate basis. To achieve this logical measurement, we couple the ancilla to the channel with the unitary
\begin{equation}
U^A = \prod_{j \in \supp{p^X}} U^Z[X_j(t), a] \prod_{j \in \supp{p^Z}} U^Z[Z_j(t), a],
\end{equation}
where $p = (p^X \, p^Z)^T$ such that $p = v(P)$ as defined in Eqn.~(\ref{Eqn:Vectorise}). Again, for simplicity, we have coupled the ancilla to the physical qubits of a common time interval $t$, but showing our construction is general beyond this constraint is straight forward. In fact, as we will observe, we find practical benefits from coupling an ancilla to physical qubits in different time intervals later in Sec.~\ref{Sec:TwistedSurfaceCode}.

\begin{figure}
\includegraphics{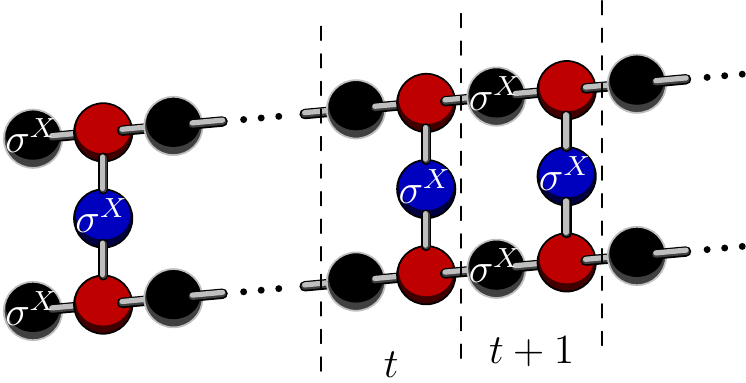}
\caption{\label{Fig:ParityMeasurements} Two type-I foliated qubits where ancillas are coupled to perform $X_1 X_2$ parity measurements. To the left of the figure we show a logical operator $\sim X_1 X_2$ by coupling an ancilla, shown in blue, to qubits $X_1(1)$ and $X_2(1)$. To the right of the figure we show an element of the measurement pattern that is generated by two parity measurements that are made at different time intervals. This term is also a stabilizer of $\mathcal{F}$.}
\end{figure}

Upon coupling the ancilla to the channel system to form the resource state, the new stabilizer group for the resource state will include $C[a] = U^A \sigmax{a} U^{A\dagger} \in \mathcal{R}$ such that
\begin{equation}
C[a] = \sigmax{a} \prod_{j \in \supp{p^X}} \sigmaz{X_j(t)} \prod_{j \in \supp{p^Z}}  \sigmaz{Z_j(t)}.
\end{equation}

We additionally have logical operators of the resource state
\begin{equation}
X_j \sim \sigmaz{a}^{p^Z_j} \Sigma_j^X(t)\sigmaz{X_j(t)} ,
\end{equation}
and
\begin{equation}
 Z_j \sim \Sigma_j^Z(t)\sigmaz{Z_j(t)}.
\end{equation}
which follows from Eqn.~(\ref{Eqn:CZPauliX}) and the definition of the logical operators of the channel system $\chan$ shown in Eqn.~(\ref{Eqn:ChannelLogical}) where $p^Z_j$ is the $j$-th element of the vector $p^Z$. Combining the above expressions we find $P' \sim P$ such that
\begin{eqnarray}
P' &=& \sigmaz{a}^{p^X\cdot p^Z}  \prod_{j \in \supp{p^X}} \Sigma_j^X(t)  \sigmaz{X_j(t)}  \\ && \times  \prod_{j \in \supp{p^Z}} \Sigma_j^Z(t)  \sigmaz{Z_j(t)}. \nonumber \label{Eqn:Centre}
\end{eqnarray}
It follows then that $P  \sim  P' C[a] $ such that
\begin{equation}
P' C[a] =  \sigmax{a} \sigmaz{a}^{p^X\cdot p^Z}  \prod_{j \in \supp{p^X}} \Sigma_j^X(t)  \prod_{j \in \supp{p^Z}} \Sigma_j^Z(t). \label{Eqn:CentreElement}
\end{equation}
We therefore have that $P \sim C[a] P' \in \mathcal{M}$ provided we choose $\sigmax{a} \sigmaz{a}^{p^X\cdot p^Z} \in \mathcal{M}^A $, thus showing we can infer the logical operator $P$ from the measurement data given an appropriate choice of measurements $\mathcal{M}$. To the left of Fig.~\ref{Fig:ParityMeasurements} we show the element $P' \in \mathcal{M}$ with $P' \sim P = X_1X_2$ measured from a pair of type-I foliated qubits. We also show a parity check $P' \in \mathcal{M}$  in Fig.~\ref{Fig:PauliYY} such that we measure $P' \sim P = Y_1 Y_2$. In this case the ancilla qubit couples with each foliated qubit twice to include Pauli-Y terms in the check.

\begin{figure}
\includegraphics{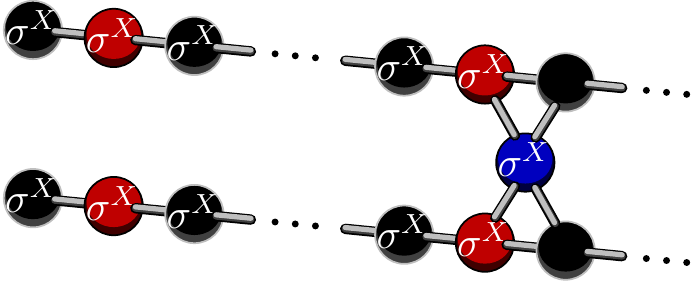}
\caption{\label{Fig:PauliYY} Two type-I foliated qubits where we use the ancilla to measure the parity of the two foliated qubits in the Pauli-Y basis. The ancilla is coupled to the physical qubits indexed $X_j(t)$ and $Z_j(t)$ for each foliated qubit $j = 1,\,2$. We measure the ancilla in the Pauli-X basis, since there are an even number of Pauli-Y terms in the parity measurement. }
\end{figure}

\subsection{The compatibility of parity measurements}
\label{Subsec:Defrustation}

Having now discussed how to include a single logical measurement in the foliated system, we next investigate the conditions under which we can simultaneously measure two degrees of freedom of the input system, $P$ and $Q$ where $P,\,Q \in \mathcal{P}_n$.

We now consider a resource state $\mathcal{R}$ with two ancillas $\mathcal{A} = \left\langle \sigmax{a}, \sigmax{b} \right\rangle $ that are coupled to the channel $\chan = \bigotimes_{j} \chan_j$ at time intervals $t$ and $t'$ via unitary $U^A = U_a U_b$ where
\begin{equation}
U_a =  \prod_{j \in \supp{p^X}} U^Z[X_j(t), a] \prod_{j \in \supp{p^Z}} U^Z[Z_j(t), a],
\end{equation}
and
\begin{equation}
U_b =  \prod_{j \in \supp{q^X}} U^Z[X_j(t'), b] \prod_{j \in \supp{q^Z}} U^Z[Z_j(t'), b].
\end{equation}
where $p = v(P)$ and $q = v(Q)$. The measurement pattern is chosen according to the rules above such that the chosen parity measurements are made correctly. In any case ancillas $a$ and $b$ are measured in either the Pauli-X or Pauli-Y basis. We first suppose that $t' \not= t$, and afterwards we look at the more complicated case where $t = t'$. As we will show, we find that we can can measure both $P$ and $Q$ simultaneously from the logical space of the channel provided $P$ and $Q$ commute. While the setup presented above is sufficient if both $U_a$ and $U_b$ are coupled to the channel at two different time intervals, we find that in certain situations it is necessary to modify $U^A$ in order to measure both $P$ and $Q$ with a coupling at a common time interval.

Without loss of generality we begin with the case where $t < t'$ such that $P'$ as in Eqn.~(\ref{Eqn:CentreElement}) is an element of $\mathcal{R}$ since this operator shares no common support with $U_b$. The resource state additionally includes the representative $Q'\sim Q$ of the logical operator, such that
\begin{equation}
Q' = U_a \left( M[b] \prod_{j \in \supp{q^X}} \Sigma_j^X(t')  \prod_{j \in \supp{q^Z}} \Sigma_j^Z(t') \right) U_a^\dagger
\end{equation}
where we take $M[b] = \sigmax{b} \sigmaz{b}^{q^X\cdot q^Z} $ which follows from the inclusion of the term $C[b] = U_b \sigmax{b}U_b^\dagger \in \mathcal{R}$. To determine the value of $Q'$ we are interested in the commutation relations between the operator $U_ a$ and the operators $\Sigma_j^X(t')$ and $\Sigma_j^Z(t')$ which share mutual support on the physical system. Using Eqns.~(\ref{Eqn:CZPauliX}) and~(\ref{Eqn:ChannelLogical}) it follows that
\begin{equation}
U_a \Sigma^X_j(t') U_a^\dagger = \Sigma^X_j(t')  \sigmaz{a}^{p^Z_j},
\end{equation}
and
\begin{equation}
U_a \Sigma^Z_j(t') U_a^\dagger = \Sigma^Z_j(t')  \sigmaz{a}^{p^X_j}.
\end{equation}
Given that $U_a M[b] U_a^\dagger = M[b]$ we thus have that 
\begin{equation}
Q' =  \sigmaz{a}^{\Upsilon(p,q)}  M[b] \prod_{j \in \supp{q^X}} \Sigma_j^X(t')  \prod_{j \in \supp{q^Z}} \Sigma_j^Z(t'),
\end{equation}
where $\Upsilon(p,q)$ is given in Eqn.~(\ref{Eqn:Upsilon}). In this case, for $ \Upsilon(p,q) \not= 0 $ we must measure ancilla $a$ in the Pauli-Z basis in order to infer the value of $Q$ from the measurement pattern. On the other hand, to measure $P$ we must measure ancilla $a$ in either the Pauli-X basis or the Pauli-Y basis to infer its value from $\mathcal{M}$. The conclusion of this discussion is that we cannot infer both measurements $P$ and $Q$ from $\mathcal{M}$ unless $\Upsilon(p,q) = 0$ in order for them to be measured simultaneously from the logical space of the channel. This is, of course, consistent with the standard postulates of quantum mechanics which only permits the simultaneous measurement of both $P$ and $Q$ provided the operators commute.

We next consider the case that $t = t'$. In this case we have that
\begin{equation}
U_a \Sigma^X_j(t) U_a^\dagger = \Sigma^X_j(t) \sigmaz{a} ^ {p^Z_j}, 
\end{equation}
and
\begin{equation}
U_b \Sigma^X_j(t) U_b^\dagger = \Sigma^X_j(t) \sigmaz{b} ^ {q^Z_j}.
\end{equation}
Unlike the previous case though, $U_a$ and $U_b$ commute with $\Sigma^Z_j(t)$ terms. To this end we find that
\begin{equation}
P \sim  M[a] \sigmaz{b}^{p^X \cdot q^Z}  \prod_{j \in \supp{p^X}} \Sigma_j^X(t)  \prod_{j \in \supp{p^Z}} \Sigma_j^Z(t),
\end{equation}
and
\begin{equation}
Q \sim  M[b] \sigmaz{a}^{p^Z \cdot q^X}  \prod_{j \in \supp{q^X}} \Sigma_j^X(t')  \prod_{j \in \supp{q^Z}} \Sigma_j^Z(t'),
\end{equation}
where $M[a]$ and $M[b]$ are either Pauli-X or Pauli-Y measurements. We thus see that with the current choice of $U^A$ the measurements of both $P$ and $Q$ are incompatible unless both $p^X \cdot q^Z = 0$ and $p^Z \cdot q^X= 0 $; a sufficient condition for $P$ and $Q$ to commute. Indeed, if $PQ = QP $ then $p^X \cdot q^Z = p^Z \cdot q^X$. We note that for CSS stabilizer codes where either $p^X = 0$ or $p^Z=0$ holds for all stabilizers, we have that $p^X\cdot q^Z = p^Z \cdot q^X = 0$ necessarily holds between all pairs of stabilizers. As such, none of the following measures we describe are necessary to measure the stabilizer checks for foliated CSS codes as the authors considered in Ref.~\cite{Bolt18}.

From here on we suppose that $P$ and $Q$ commute. Supposing that $p^X \cdot q^Z = p^Z \cdot q^X = 0$ we find that the values of $P$ and $Q$ are members of $\mathcal{M}$ as we have already proposed. We find that we can also modify the channel to measure two commuting operators $P$ and $Q$, even if $p^X \cdot q^Z = p^Z \cdot q^X = 1$. One approach to deal with this issue, as we have already discovered, is to measure the two operators at different time intervals. A smaller modification may simply be to change only a subset of the target qubits of either $U_a$ or $U_b$ onto a different time interval. We present an example where we use this method to good effect in the following Section. Alternatively, we can append an additional controlled-phase gate, $U^Z[a,b]$, in the entangling circuit $U^A$ which also recovers the compatibility of the two commuting measurements. We adopt this approach throughout the main text.

\section{The foliated system}
\label{Sec:CodeFoliation}

In this Section we define the foliated channel; the fault-tolerant system that will propagate an input quantum error-correcting code onto an output state. Further, we will also see that with a suitable choice of resource state we are able to measure the input code in such a way that it is deformed by the foliated system. In what follows we sketch out the different components we use to make up a foliated system before presenting two theorems that describe its function. 
Specifically, we will require definitions of the initial system, the channel system, the resource state and the system after the the prescribed pattern of measurements is made. Theorem~\ref{Thm:OutputCode} then explains the action of the foliated system on the logical input state, and Theorem~\ref{Thm:ChannelStabilizers} describes stabilizer checks we can use to identify errors that act on the physical qubits of the foliated system. See Ref.~\cite{Vuillot18} for a related discussion from the perspective of code deformations with stabilizer codes. The construction is summarised in Fig.~\ref{Fig:Sketch}.

\begin{figure}
\includegraphics{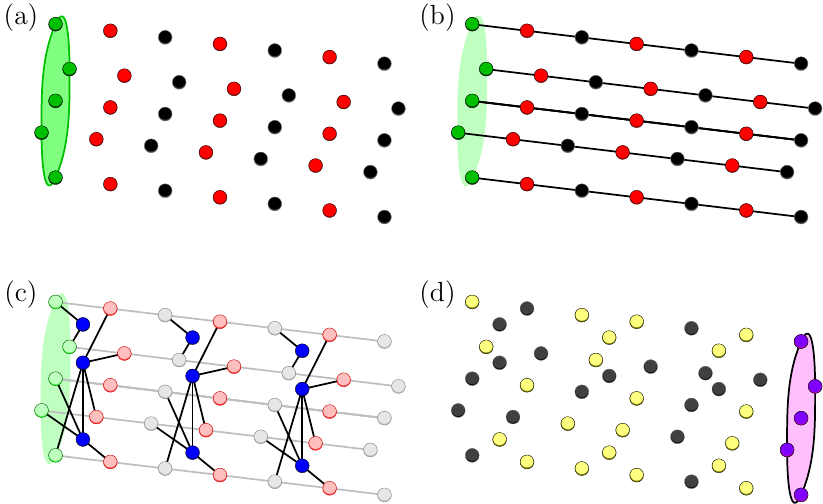}
\caption{Sketch of the foliation process and construction of the system. Time moves from left to right across the page in each figure. (a)~The initial system, $\mathcal{I}$, consists of an input code, $\mathcal{G}_{\text{in}}$, shown with green qubits to the left of the figure, and other qubits, shown in black and red in the figure, prepared in the $|+\rangle$ state. (b)~The channel system $\mathcal{K}$ is produced by applying unitary $U^C$ to $\mathcal{I}$. The edges in the figure represent controlled-phase gates prepared between pairs of qubits. Effectively, we have concatenated each of the qubits of $\mathcal{G}_{\text{in}}$ into the code space of a one-dimensional cluster state. (c)~We produce $\mathcal{R}$ by entangling the qubits of the ancilla system, $\mathcal{A}$, to the channel system. The coupling is specified by the choice of $\mathcal{G}_{\text{ch.}}$. This choice determines the check operators of the foliated system, and the deformation on the input qubit. (d)~The qubits of $\mathcal{R}$ are measured according to $\mathcal{M}$. This propagates the input system onto the output system, $\mathcal{G}_\text{out}$, shown in purple to the right of the figure. The measurement pattern determines the value of the stabilizer checks, and in turn the correction that must be applied to the output system. \label{Fig:Sketch}}
\end{figure}

As described in Sec.~\ref{Sec:FoliationModel}, the foliated system is defined as 
\begin{equation}
\mathcal{F} = \mathcal{R} \cup \mathcal{M},
\end{equation} 
where $\mathcal{R}$ is the resource state, and $\mathcal{M}$ is the measurement pattern. The system $\mathcal{F}$ is jointly determined by an input code $\mathcal{G}_{\text{in}}$ and a channel code $\mathcal{G}_\text{ch.}$. Also to be determined is the length of the channel $D$, which, unless otherwise stated, we suppose is large, i.e. comparable to the distance of the code $\mathcal{G}_{\text{ch.}}$. We assume that both codes $\mathcal{G}_{\text{in}}$ and $\mathcal{G}_{\text{ch.}}$ are stabilizer codes and, moreover, are supported on the same set of qubits that we index $1 \le j \le n$.

The resource state is of the form $\mathcal{R} = U^A \left(  \chan \otimes \mathcal{A} \right)$ where $\chan$ is the channel system and $\mathcal{A}$ is a set of ancilla qubits prepared in a product state that we couple to the channel with unitary operator $U^A$. The channel $\chan$ is specified by $\Gin$. The ancilla system $\mathcal{A}$ and the entangling unitary $U^A$ are determined by a generating set of $\Gch$.

We remind the reader that we index the qubits of each chain in terms of time intervals, indexed $1\le t \le D$, such that the qubit at the $2 t -1$-th site ($2 t$-th site) of the $j$-th chain is indexed $Z_j(t)$ ($X_j(t)$) and we have that the last qubit of each chain is indexed $Z_j(D+1) = 2D+1$ in the Pauli coordinate system which supports the output state.

Previously we have considered the tensor product of $n$ one-dimensional cluster states, $ \bigotimes_{j=1}^n \mathcal{K}_j$, where $\chan_j = U_j \mathcal{I}_j$, to propagate $n$ individual qubits described by logical operators $X_j$ and $Z_j$ for $1 \le j \le n$. In the following definition of the channel system we encode the input code $\Gin$ onto the logical qubits of the $n$ foliated qubits.

\begin{defn}[Channel system]
\label{Def:Channel}
The channel system $ \chan = U^C \mathcal{I} $ is produced by applying unitary operator $U^C $ as defined in Eqn.~(\ref{Eqn:ChannelCoupling}) to the initial state $\mathcal{I}$. The stabilizer group $\mathcal{I}$ is such that the stabilizers of $\Gin$ are encoded on the qubits indexed $Z_j(1)$, and the other qubits are prepared in a product state. Explicitly, for each $G \in \Gin$ we have $G_{\text{in}} \in \mathcal{I}$ with
\begin{equation}
G_{\text{in}} \equiv \prod_{j\in \supp{g^X}} \sigmax{Z_j(1)}  \prod_{j\in \supp{g^Z}} \sigmaz{Z_j(1)},
\end{equation}
where $(g^X \, g^Z)^T = v(G)$. The other qubits of $\mathcal{I}$ are encoded in the $+1$ eigenvalue eigenstate of the Pauli-X matrix, such that we have
\begin{equation}
\sigmax{X_j(t)} \in \mathcal{I} \quad \forall j,\, t \le D, 
\end{equation}
and
\begin{equation}
\sigmax{Z_j(t)} \in \mathcal{I} \quad \forall j, \, 2 \le  t \le D +1. 
\end{equation}
\end{defn}

The resource state is generated by coupling ancilla qubits to the channel. Measuring these qubits in the appropriate basis provides data to identify qubits that have experienced errors. We couple ancilla to the resource state in a fashion according to a particular generating set $\Gres$ of some Abelian stabilizer group $\Gch$. The generating set $\Gres$ may be an over complete generating set of $\Gch$.

\begin{defn}[Resource state] \label{Def:Resource}
For a choice of input code $\Gin$ that is implicit in the channel system $\chan$, see Def.~\ref{Def:Channel}, and channel code $\Gch$ with a specified generating set $\mathcal{G}_{\mathcal{R}}$, we define the stabilizer group of the resource state by
	\begin{equation}
	\mathcal{R} = U^A  \left(  \chan \otimes \mathcal{A} \right), 
	\end{equation}
	where the ancilla system $\mathcal{A}$ and entangling unitary $U^A$ are defined as follows. The ancilla system is in the product state
	 \begin{equation}
	\mathcal{A}  = \left\{  \sigmax{G(t)} : \forall t,\, G \in \mathcal{G}_\mathcal{R} \right\},
	 \end{equation}
where the coordinates $G(t)$ uniquely index all the ancillae in the ancilla system at a given time interval $t$. The entangling unitary $U^A$ is given by
	\begin{equation}
	U^A = V  \prod_{G\in\Gres, \, t}  U[G(t)] , \label{Eqn:UA}
	\end{equation} 
	where 
	\begin{equation}
	U[G(t)] = \prod_{j \in \supp{g^X}} U^Z[X_j(t), G(t)] \prod_{j \in \supp{g^Z}} U^Z[Z_j(t), G(t)], 
	\end{equation}
	for $(g^X \, g^Z)^T = v(G)$ and $V = \prod_t V(t)$ with
	\begin{equation}
	V(t) = \prod_{\langle\langle G, H \rangle \rangle} U^Z\left[G(t) , H(t)\right]^{g^X\cdot h^Z},
	\end{equation}
where we take the product of all unordered pairs of elements $G,\,H \in \Gres$ with $G \not=H$, and we have that $(g^X\,g^Z)^T = v(G)$ and $(h^X\,h^Z)^T = v(H)$.
\end{defn}

We remark that in the above definition, the unitary $U^A$ couples the ancilla qubits to the channel system along with each other. The component $V$ enables the simultaneous measurement of all commuting measurements of $G\in \mathcal{G}_{\mathcal{R}}$ at common time intervals. This is achieved by applying controlled-phase gates between ancilla qubits $G(t)$ and $H(t)$ where $g^X \cdot h^Z = g^Z \cdot h^X = 1$. A simple example giving intuition for why this operator is necessary is discussed in SubSec.~\ref{Subsec:Defrustation}.

We specify a measurement pattern that propagates information through the resource state onto the output system while additionally acquiring data that enables us to identify the locations of errors.

\begin{defn}[Measurement pattern] \label{Def:Measure}
The measurement pattern is such that
\begin{equation}
\mathcal{M}  =\mathcal{M}^C \cup \mathcal{M}^A,
\end{equation} 
where $\mathcal{M}^C$ and $\mathcal{M}^A$ denote the measurements on the channel and ancilla system, respectively, given by
\begin{equation}
	\mathcal{M}^C = \left\{ \sigma^X[Z_j(t)],\,\sigma^X[X_j(t)] \in \mathcal{M}^C: 1\leq t \leq D \right\},
	\end{equation}
	and
	\begin{equation}
	\mathcal{M}^A = \left\{M[G(t)] \in \mathcal{M}^A : G\in \mathcal{G_{\mathcal{R}}}, 1\leq t \leq D\right\},
\end{equation}
	where $(g^X \, g^Z)^T = v(G)$ and we have
\begin{equation}
M[G(t)] = \sigmax{G(t)} \left(\text{i}\sigmaz{G(t)}\right)^{g^X \cdot g^Z}. \label{Eqn:Measure}
\end{equation}
\end{defn}

It may be helpful to expand Eqn.~(\ref{Eqn:Measure}) such that
	\begin{equation}
	M[G(t)] = \begin{cases}
	\sigma^X[G(t)] \quad \text{if} ~ g^X\cdot g^Z = 0, \\
	\sigma^Y[G(t)] \quad \text{if}  ~g^X\cdot g^Z = 1,
	\end{cases}
	\end{equation}
and we remember the inner product is taken modulo 2.

\subsection{Foliation}

With the resource state $\mathcal{R}$ and measurement pattern $\mathcal{M}$ defined above we now turn our attention to the foliated system $\mathcal{F} = \mathcal{R} \cup \mathcal{M}$. Important properties of the system will be determined from elements in $\cent(\mathcal{F})$. While we have introduced the foliated model as a subsystem code, we have included additional structure to capture the process of foliation, namely, we prepare the system in a fixed gauge of the resource state, $\mathcal{R}$ and we project the system onto a gauge of $\mathcal{M}$. As such, elements of $\cent(\mathcal{F})$ have a different role depending on their inclusion in $\mathcal{R}$ and $\mathcal{M}$.

Elements $\cent(\mathcal{F}) \cap \mathcal{M}$ are observable degrees of freedom that are measured under projection by the single-qubit measurement pattern. Among these include stabilizer operators and logical operators of the input state that are measured by the resource. On the other hand elements $\cent(\mathcal{F}) \backslash \mathcal{M} $ give rise to the stabilizers and logical operators of the output state which are propagated or inferred at a later point by taking the output as the input of another channel. Further, we can also look at elements of $\cent(\mathcal{F})$ with respect to their membership of $\mathcal{R}$. Indeed, elements $\cent(\mathcal{F})\cap \mathcal{R} $ are stabilizers whereas elements $\cent(\mathcal{F})  \backslash \mathcal{R}$ are logical degrees of freedom that are either propagated through the resource state or measured under the projection. 

Remarkably, due to the decomposition of the foliated system we have presented here we can separate error correction, determined by stabilizer group $\mathcal{S}$, and the logical function of a given channel, $\Gout$, into two separate parts. Let us now characterize the output state of the foliated system after measurements have been performed, as well as the stabilizers of the channel responsible for detecting errors. 

The output of the channel consists of an encoding, determined by $\Gout$, and a set of logical operators $\mathcal{L}_{\text{out}}$. The logical function of the channel can be summarized by how it maps input logical operators to output logical operators. The following theorem describes the output encoding and logical degrees of freedom. 

\begin{thm}\label{Thm:OutputCode}
	For any foliated channel $\mathcal{F}$ determined by input code $\Gin$ and channel code $\Gch$, the output state is a codeword of the output code $\Gout$, with
	\begin{equation}
	\Gout = \Gch \cup \left( \Gin \cap \cent\left( \Gch \right) \right) . \label{Eqn:OutputStabilizer}
	\end{equation}
	The logical operators of $\Gout$ are given by
	\begin{equation}
	\mathcal{L}_{\text{out}} = \left( \cent(\Gin) \cap \cent(\Gch) \right) \backslash \left( \Gin \cup \Gch \right). \label{Eqn:OutputLogical}
	\end{equation}
	Further, elements of $\left(\cent(\Gin) \backslash \Gin\right) \cap \Gch$ are measured.
\end{thm}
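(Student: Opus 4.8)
The plan is to reduce the foliated channel to the single logical statement ``transmit the input code $\Gin$ unchanged through the resource, then measure every operator of $\Gch$'', and then to recognise Eqns.~(\ref{Eqn:OutputStabilizer}) and~(\ref{Eqn:OutputLogical}) as the textbook rule for updating a stabilizer group under measurement of the (possibly $\Gin$-noncommuting) Abelian group $\Gch$. The first task is a propagation dictionary. Using the single-chain transmission results of Sec.~\ref{Sec:FoliatedQubit}, for each chain $j$ the logical operators obey $X_j \sim \Sigma_j^X(t)\sigmaz{X_j(t)}$ and $Z_j\sim\Sigma_j^Z(t)\sigmaz{Z_j(t)}$; pushing a representative to the final site $Z_j(D+1)$ and absorbing the $\Sigma$ factors, which are products of $\sigmaX$ on qubits measured by $\mathcal{M}^C$, gives $X_j\sim\sigmax{Z_j(D+1)}$ and $Z_j\sim\sigmaz{Z_j(D+1)}$ up to signs fixed by the outcomes. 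I would package this into a map $\beta$ sending a Pauli $G$ on the abstract qubits $1\le j\le n$, with $v(G)=(g^X\,g^Z)^T$, to the output-qubit Pauli with the same vector. Since $\beta$ is an isomorphism of Pauli groups modulo phase, both $\Gout$ and $\mathcal{L}_{\text{out}}$ can be read off as images under $\beta$.

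Next I would show that the resource measures $\Gch$. Generalising the single-ancilla computation behind Eqn.~(\ref{Eqn:CentreElement}), for $G\in\Gres$ the resource stabilizer $C[G(t)]=U^A\sigmax{G(t)}U^{A\dagger}$ evaluates, via Eqn.~(\ref{Eqn:CZPauliX}) and Def.~\ref{Def:Resource}, to $\sigmax{G(t)}$ times $\sigmaz{}$ operators on the channel qubits $X_j(t),\,Z_j(t)$ for $j\in\supp{g^X},\,\supp{g^Z}$, together with ancilla--ancilla $\sigmaz{}$ factors generated by $V$. Multiplying the foliated representative of the logical operator $G$ at time $t$ by $C[G(t)]$ leaves an operator supported on the ancilla as $M[G(t)]$ times a product of $\Sigma$ factors, hence an element of $\mathcal{M}$. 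This shows every $G\in\Gch$ is inferred from the measurement record, and specialised to $G\in(\cent(\Gin)\backslash\Gin)\cap\Gch$ it proves the final clause that these logical operators of the input are measured, i.e. that the channel deforms the input code by measuring them.

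With transmission and the measurement of $\Gch$ established I would derive the two displayed equations by the stabilizer update. Two families of operators stabilise the output state: first, the input stabilizers $G\in\Gin$ whose foliated representatives commute with every $C[G'(t)]$ and every $M[G'(t)]$, which by the commutation computation above occurs precisely when $G\in\cent(\Gch)$; these pass through unmeasured and impose $\beta\left(\Gin\cap\cent(\Gch)\right)$. Second, the measured generators of $\Gch$, whose recorded outcomes propagate through the identity channel to impose $\beta(\Gch)$. Together these give Eqn.~(\ref{Eqn:OutputStabilizer}). For the logical operators I would characterise $\cent(\mathcal{F})\backslash\mathcal{R}$ modulo $\mathcal{M}$: an input preimage $G$ must commute with all measured operators, so $G\in\cent(\Gch)$, and with the surviving stabilizers, so $G\in\cent(\Gin)$, while not itself being a stabilizer, so $G\notin\Gin\cup\Gch$, which is Eqn.~(\ref{Eqn:OutputLogical}). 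A short symplectic check confirms this is a faithful set of coset representatives for $\cent(\Gout)\backslash\Gout$: writing the preimages of $\Gin$ and $\Gch$ as isotropic subspaces $I$ and $J$ of $\mathbb{Z}_2^{2n}$ with centralizers $I^\perp,J^\perp$, one verifies $\cent(\Gout)=(I^\perp\cap J^\perp)+J$ and $\Gout\cap(I^\perp\cap J^\perp)=(I\cap J^\perp)+(J\cap I^\perp)$, so that $(\cent(\Gin)\cap\cent(\Gch))\backslash(\Gin\cup\Gch)$ exhausts the logical representatives.

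The main obstacle I anticipate is the joint-consistency bookkeeping: I must show the full ancilla system can be measured simultaneously, i.e. that all the $M[G(t)]$ together with the surviving stabilizers lie in a common centralizer even when generators of $\Gres$ are coupled at a common time interval. This is exactly the frustration analysed in Subsec.~\ref{Subsec:Defrustation}: for $G,\,H\in\Gres$ the naive couplings leave a residual $\sigmaz{}$ term of order $\Upsilon(g,h)$, and I would verify that the ancilla--ancilla controlled-phase factors $U^Z[G(t),H(t)]^{g^X\cdot h^Z}$ assembled in $V$ cancel these precisely, using that $g^X\cdot h^Z=g^Z\cdot h^X$ for commuting elements of $\Gch$. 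The second delicate point is the converse inclusions -- that the constructed operators exhaust $\Gout$ and $\mathcal{L}_{\text{out}}$ rather than merely generating subgroups; for this I would count independent generators, matching the numbers of surviving stabilizers and logical qubits against $n$ and the rank of $\Gch$, and appeal to the bijectivity of $\beta$ on the output qubits.
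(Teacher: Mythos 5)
Your proposal is correct and follows essentially the same route as the paper's proof in Appendix~\ref{App:ThmProofs}: your propagation map $\beta$ is the paper's operators $R_P$ and $K_P$ from Lemma~\ref{Lem:Channel}, your computation of $C[G(t)]$ and the cancellation of residual $\sigmaz{}$ terms by $V$ is Lemma~\ref{Lem:Checks} and its corollary, and the combination of the two to read off $\Gout$, $\mathcal{L}_{\text{out}}$ and the measured elements of $\left(\cent(\Gin)\backslash\Gin\right)\cap\Gch$ is the argument of the paper's Theorem~1 proof. The only substantive addition is your explicit symplectic dimension count for the converse inclusions, where the paper simply asserts that a similar argument applies.
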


The two equations given above specify precisely the function of a channel at a macroscopic level, independent of the foliated system that performs the manipulation of the input state.

For the channel to be fault-tolerant, we need the channel to contain stabilizers that can check for errors, and these stabilizers need to be able to be inferred from measurements. Recall the stabilizer of the foliated sytem is given by 
\begin{equation}
\mathcal{S} = \cent(\mathcal{F}) \cap \mathcal{R} \cap \mathcal{M}.
\end{equation} 
The following theorem identifies two types of important operators that we call bulk stabilizers and boundary stabilizers.  

\begin{thm}\label{Thm:ChannelStabilizers}
For any foliated channel $\mathcal{F}$ specified by input code $\Gin$ and $\Gres$ which generates $\Gch$, we have
 \begin{enumerate}
 	\item $S_{\text{bulk}}[G(t)] \in \mathcal{S}$ \quad $\forall G  \in \Gch,\,  2 \le  t \le D, $
 	\item $S_{\text{bdry.}} [G(t)] \in \mathcal{S}$ \quad  $\forall G \in \Gch \cap \Gin, \, t \le D,$
 \end{enumerate}  
where for $v(G) = (g^X \, g^Z)^T$ we have
\begin{eqnarray}
S_{\text{bulk}}[G(t)] & =&  M[G(t)] M[G(t-1)] \label{Eqn:StabilizerCell} \\ 
&&\times \prod_{j \in \supp{g^X}} \Sigma_j^X(t) \Sigma_j^X(t-1) \nonumber \\
&&\times \prod_{j \in \supp{g^Z}} \Sigma_j^Z(t) \Sigma_j^Z(t-1), \nonumber
\end{eqnarray} 
and
\begin{equation}
S_\text{bdry.}[G(t)]  =  \prod_{\tilde{G}\in \xi(G)}M[\tilde{G}(t)]  \prod_{j \in \supp{g^X}} \Sigma_j^X(t)  \prod_{j \in \supp{g^Z}} \Sigma_j^Z(t), \label{Eqn:BdryStabilizer}
\end{equation} 
where $\xi(G) \subseteq \Gres$ is defined such that the product of all the terms of $\xi(G)$ give $G \in \Gch$.
\end{thm}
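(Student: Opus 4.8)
The plan is to verify directly that the two claimed operators, $S_{\text{bulk}}[G(t)]$ and $S_{\text{bdry.}}[G(t)]$, lie in the stabilizer group $\mathcal{S} = \cent(\mathcal{F}) \cap \mathcal{R} \cap \mathcal{M}$. This requires three things of each candidate operator: that it commutes with every element of $\mathcal{F} = \mathcal{R} \cup \mathcal{M}$, that it is equivalent (up to stabilizers of $\mathcal{R}$) to an element of $\mathcal{R}$, and that it is equivalent to an element of the measurement pattern $\mathcal{M}$. The third requirement is the crux: since $\mathcal{M}$ consists entirely of single-qubit Pauli-$X$ or Pauli-$Y$ measurements on channel qubits $\sigma^X[Z_j(t)],\sigma^X[X_j(t)]$ and ancilla measurements $M[G(t)]$, an operator lies in $\cent(\mathcal{F}) \cap \mathcal{M}$ precisely when it can be written as a product of these single-qubit measurement operators. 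The operators $\Sigma_j^X(t)$, being products of $\sigma^X[Z(\mu)]$ and $\sigma^X[X(\mu)]$ as defined in Eqn.~(\ref{Eqn:SigmaTypeI}), are manifestly products of channel measurement operators, and the $M[G(t)]$ factors are by definition ancilla measurements in $\mathcal{M}^A$, so both $S_{\text{bulk}}$ and $S_{\text{bdry.}}$ are by construction products of elements of $\mathcal{M}$.

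First I would establish the bulk stabilizer. The idea is that the product $M[G(t)] M[G(t-1)]$ of two consecutive ancilla measurements, when the ancillae couple to the same logical operator $G$ at adjacent time intervals, combines with the telescoping product of $\Sigma_j^P$ operators to form a closed operator supported on a finite ``cell'' of the lattice. Concretely, using the stabilizer $C[G(t)] = U^A \sigma^X[G(t)] U^{A\dagger} \in \mathcal{R}$, I would show that $M[G(t)]$ is equivalent modulo $\mathcal{R}$ to the operator $\Sigma^{g}(t)\,\sigma^Z$-dressing on the channel qubits it couples to; the product over two adjacent times then cancels the ``tails'' $\Sigma_j^P(t)\Sigma_j^P(t-1)$ down to local support between sites $t-1$ and $t$. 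The relation $U^Z[\mu,\nu]\sigma^X[\mu]U^Z[\mu,\nu] = \sigma^X[\mu]\sigma^Z[\nu]$ from Eqn.~(\ref{Eqn:CZPauliX}) is the engine that converts the ancilla-channel coupling into the required $\sigma^Z$ factors. Commutation with $\mathcal{F}$ then follows because $G \in \Gch$ is in the Abelian channel code and the cancellation leaves only mutually commuting Pauli-$X$ measurement factors.

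For the boundary stabilizer I would exploit the input boundary condition baked into Def.~\ref{Def:Channel}: the stabilizers of $\Gin$ are encoded on the qubits $Z_j(1)$ rather than in the $|+\rangle$ product state. When $G \in \Gch \cap \Gin$, the single ancilla measurement product $\prod_{\tilde G \in \xi(G)} M[\tilde G(t)]$ over the decomposition $\xi(G)$ of $G$ into generators of $\Gres$, dressed by a single layer $\Sigma_j^X(t)\,\Sigma_j^Z(t)$, terminates cleanly against the encoded input boundary — there is no ``earlier'' time layer to cancel against because the chain begins there, so one gets a single-sided rather than two-sided telescoping. The condition $G \in \Gin$ is exactly what guarantees that the operator $G_{\text{in}} \in \mathcal{I} \subseteq \chan$ closes off the boundary end, making the truncated product a genuine stabilizer rather than a logical operator.

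The main obstacle I anticipate is tracking the phases and the $\sigma^Z$-on-ancilla factors carefully through the $V$-component of $U^A$. The entangling unitary $V(t)$ introduces controlled-phase gates between distinct ancillae $G(t), H(t)$ whenever $g^X \cdot h^Z = g^Z \cdot h^X = 1$, precisely to restore compatibility of non-trivially-overlapping commuting measurements, as discussed in Subsec.~\ref{Subsec:Defrustation}. When I conjugate a candidate stabilizer through $U^A$ to check membership in $\mathcal{R}$, these inter-ancilla couplings generate extra $\sigma^Z$ factors on neighbouring ancillae, and I must verify these either cancel in the two-time product or are absorbed into the $(\text{i}\sigma^Z[G(t)])^{g^X \cdot g^Z}$ phase already present in the definition of $M[G(t)]$ in Eqn.~(\ref{Eqn:Measure}). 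Verifying that these bookkeeping terms vanish — using the Abelianness of $\Gch$ so that $g^X \cdot h^Z = g^Z \cdot h^X$ for any $G,H \in \Gch$ — is where the argument demands the most care, and I would defer the full calculation to the appendix (App.~\ref{App:ThmProofs}).
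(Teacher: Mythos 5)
Your proposal is correct and follows essentially the same route as the paper's proof in Appendix~\ref{App:ThmProofs}: the paper writes $S_{\text{bulk}}[G(t)]$ as the product of the conjugated ancilla-preparation stabilizers $C[G(t)]C[G(t-1)]$ with the adjacent-time pair $R'_G(t)R'_G(t-1)$ of propagated channel operators (your ``telescoping''), and $S_{\text{bdry.}}[G(t)]$ as $C[\xi(G)(t)]R'_G(t)$ using precisely the input boundary condition $G\in\Gin$ you identify. The bookkeeping of the inter-ancilla $\sigma^Z$ factors from $V$ that you flag is indeed where the work lies, and it resolves exactly as you anticipate, with the residual $\sigma^Z[G(t)]^{g^X\cdot g^Z}$ absorbed into the definition of $M[G(t)]$.
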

We remark that the set $\xi(G) $ necessarily exists by definition, since we construct the resource state with terms $G \in \Gres$ which generate $\Gch$. The set $\xi(G)$ is not necessarily unique, in which case any choice will suffice. We defer the proofs of the above Theorems to Appendix~\ref{App:ThmProofs}.

It is also worth pointing out that Theorem~\ref{Thm:ChannelStabilizers} may not necessarily describe the stabilizer group of $\mathcal{F}$ exhaustively. Indeed, with certain choices of $\Gres$ we can obtain additional elements of $\cent(\mathcal{F}) \cap \mathcal{F}$. For instance, we may choose to foliate a self-correcting stabilizer model~\cite{Brown16} such as the four-dimensional toric code~\cite{Dennis02}. In which case we have additional checks that are local to a given time interval due to constraints among the stabilizer group. It is precisely the constraints that these models present that give rise to single-shot error correction~\cite{Bombin15, Brown16a, Duivenvoorden17, Campbell18, fawzi2018constant} which enable us to identify errors on the ancillary qubits we use to make checks, see also Appendix~\ref{App:SubsystemFoliation} for a discussion on single-shot error correction by foliation of the gauge color code. The stabilizers we have described in Theorem~\ref{Thm:ChannelStabilizers} are generic to all foliated models using the construction we have presented.

Given some foliated system, it is then important to find an efficient decoding algorithm. In recent work~\cite{Brown19} it was proposed that a decoder based on minimum-weight perfect matching can be found if the symmetries of a stabilizer code are known. This has been demonstrated for several examples of topological codes, see e.g.~\cite{Wang03}, and it is suggested that this idea can be extended to more general classes of stabilizer codes. Given such a decoder, Ref.~\cite{Brown19} also outlines how to extend these symmetries along a time-like direction to find a minimum-weight perfect-matching decoder for the situation where stabilizer measurements are unreliable, see Appendix D Example~8. This decoder will map directly onto the stabilizers of a foliated system where errors on ancilla qubits correspond to measurement errors. As such, we expect that this approach to designing decoding algorithms is readily applicable to the foliated models we have proposed here. We also remark on Ref.~\cite{Bolt18} where they propose a decoder for foliated CSS codes where a decoder for $\Gch$ is already known and belief propagation between time intervals is used.

Given a decoder, we can then compare the tolerance of different foliated systems to noise. Introducing an independent and identically distributed noise model to the foliated systems we have proposed will be equivalent to a phenomenological noise model~\cite{Wang03} for the channel code. With some thought, one could also consider introducing correlated errors between the qubits of the foliated systems to simulate circuit level errors on some system of interest. This was first studied with the surface code in Ref.~\cite{Raussendorf06}.

In the following section we apply our procedures for foliation to the twisted surface code model as an illustrative example that we later use in our model of fault-tolerant quantum computation. We note that the above construction and theorems are readily generalized to include other methods of foliation including foliation using type-II qubits as we will see in the next section. Our scheme for foliation and related results also generalize to subsystem codes provided they are generated by a group with particular properties. For instance, our results apply directly to all CSS subsystem codes. We give a discussion of this extension in Appendix~\ref{App:SubsystemFoliation}.

We finally remark that our model for foliation shares some parallels with the work in Ref.~\cite{Bacon15}. In this work, the authors find subsystem codes by unifying the wires of a circuit of Clifford gates with the qubits of a quantum error-correcting code. They then determine gauge generators from the Clifford operations of the circuit. In our work the cluster-state model naturally separates the wire segments of a quantum computation and specifies them by qubits over time intervals. We then include an additional qubit for each measurement we make over the duration of the computation. The difference between these schemes is that the former construction focuses on the unitary gates of an encoding circuit whereas here we focus on the measurements we use to detect errors. While the two constructions are apparently distinct, it may be valuable to find a way of unifying them in future work.

\section{The twisted surface code}
\label{Sec:TwistedSurfaceCode}

The twisted surface code~\cite{Yoder17} provides an interesting example of a stabilizer code that cannot be foliated using existing constructions~\cite{Bolt16}. Moreover this example will be helpful later on when we consider different schemes for fault-tolerant measurement-based quantum computation. We will briefly review its stabilizer group before showing resource states for the foliated model.

The stabilizer group for the twisted surface code is closely related to that of the surface code. However the model has an improved encoding rate due to a twist defect~\cite{Bombin10} lying in the centre of the lattice~\cite{Brown17, Yoder17}. Nonetheless, due to the central twist defect we cannot find a CSS representation of the model, and as such, we use the generalized construction for foliation given above. We will mostly focus on these stabilizers.

\begin{figure}
\includegraphics{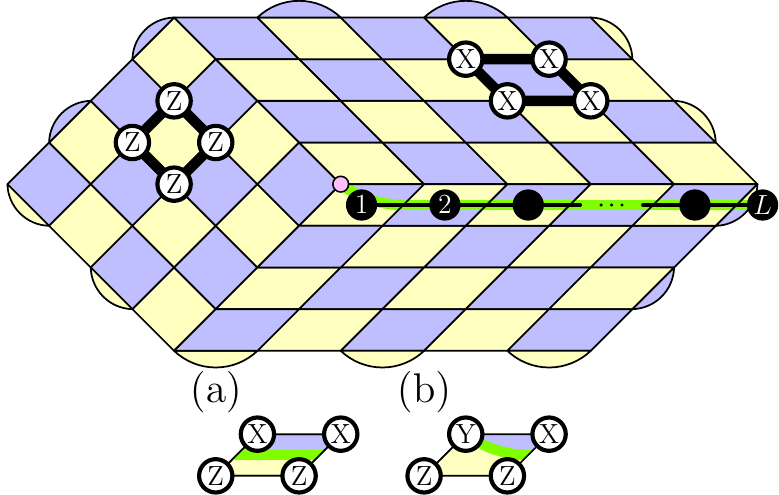}
\caption{\label{Fig:Twisted} The twisted surface code model. Examples of star and plaquette operators are shown explicitly on thick outlined plaquttes on blue and yellow faces respectively. Logical operators are supported on the qubits covered by the yellow and blue dotted lines. A defect line runs from the right-hand side of the lattice to its centre. An example of a modified stabilizer that supported on the defect line is shown at~(a). The green line terminates at the centre of the lattice. We write the stabilizer where the defect line terminates below the lattice at~(b). We focus on foliating the stabilizers that lie on the defect line, as such we number them explicitly with indices shown in black circles on the figure.}
\end{figure}

The stabilizer group for twisted surface code $\mathcal{G}_{\text{twisted}}$ is represented on the lattice in Fig.~\ref{Fig:Twisted} where qubits lie on the vertices of the lattice and stabilizers are associated to the faces, indexed $f$. On blue(yellow) faces we have the well-known star(plaquette) operators which are the product of Pauli-X(Pauli-Z) operators lying on the vertices on the corners of their respective face. The twisted surface code additionally has a defect line running from the right-hand side of the lattice to the centre, along which, stabilizers are modified. The stabilizers lying along the defect line, which is marked green in the figure, are weight-four terms that are the product of Pauli-X(Pauli-Z) stabilizers on the blue(yellow) part of the face. An example of a modified stabilizer is shown in Fig.~\ref{Fig:Twisted}(a). The stabilizer where the defect line terminates also includes a Pauli-Y term as shown in Fig.~\ref{Fig:Twisted}(b). Representations of logical operators $\overline{X}$($\overline{Z}$) that act on the one encoded qubit are the product of Pauli-X(Pauli-Z) operators supported on the dotted blue (yellow) lines on the lattice.

We now consider foliating the twisted surface code with the prescription given in the previous section where $\Gch = \mathcal{G}_{\text{twisted}}$. Stabilizers $G \in \mathcal{G}_{\text{twisted}}$ of the form of a CSS code, i.e. stabilizers where either $g^X$ or $g^Z $ are the null vector where $(g^X \, g^Z)^T = v(G)$, are foliated using the methods of Ref.~\cite{Bolt16} and take the form of those in the original works~\cite{Raussendorf06,Raussendorf07,Raussendorf07a}, as such we focus on the terms that lie along the defect line. In Fig.~\ref{Fig:CoupledAncilla} we show the resulting graph where physical qubits indexed $Z_j(t)$($X_j(t)$) are colored black(red) in the figure and time runs vertically up the page.

We denote the stabilizers lying along the defect line $G_f \in \mathcal{G}_{\text{twisted}}$ where we index face terms $1 \le f \le L$ as shown in Fig.~\ref{Fig:Twisted}. We consider vectors $ v(G_f) = (g_f^X \, g_f^Z)^T $. Notably, the stabilizers that lie along the defect line all have $g^X_f \cdot g^Z_{f+1} = 1 $. As such the operator
\begin{equation}
V = \prod_{f, t} U^Z[S_f(t), S_{f+1}(t)], 
\end{equation}
couples adjacent ancilla qubits. These bonds are shown as horizontal edges connecting adjacent blue ancilla qubits in Fig.~\ref{Fig:CoupledAncilla}. Further, the stabilizer $G_1$ is such that $g_1^X \cdot g_1^Z = 1$, we therefore measure these ancillas in the Pauli-Y basis, i.e., $\sigmay{G_1(t)} \in \mathcal{M}$. We color these ancilla qubits in green in Fig.~\ref{Fig:CoupledAncilla}.

The figure also shows two stabilizers, i.e., elements of $\mathcal{S} = \cent(\mathcal{F}) \cap \mathcal{R} \cap \mathcal{M}$ of the form shown in Eqn.~(\ref{Eqn:StabilizerCell}). The stabilizers are weight six, except the stabilizer at the centre of the lattice which includes a Pauli-Y term. This operator is weight seven.

We finally remark that the resource state we have thus far considered to carry out its designated function is by no means unique, and using other foliation techniques considered in the previous section we can devise different, arguably favourable, resource states. In Fig~\ref{Fig:LiftedChecks} we show one such alternative. In this figure, for even values of $f$ we replace the coupling operators as follows
\begin{eqnarray}
U[G_f(t)] &\rightarrow& \prod_{j \in \supp{g^X_f}} U^Z[X_j(t), G_f(t)] \nonumber \\ && \times \prod_{j \in \supp{s^Z_f}} U^Z[Z_j(t+1), G_f(t)], 
\end{eqnarray}
where now the check couples to targets $Z_j(t+1)$ instead of $Z_j(t)$. With this modification all the check measurements are compatible where we set $V=1$, thus reducing the valency of the ancilla qubits. We additionally replace the foliated qubit where the defect terminates with a type-II foliated qubit, which is colored by blue edges. This further reduces the valency of ancillas used to measure operators $S_1(t)$ as we need only couple to a single target to measure the Pauli-Y component of this check. This also reduces the valency of the physical qubits of the type-II foliated chain. However, this comes at the expense of including an additional physical qubit per time interval to include a type-II foliated qubit in the system.

To summarize the present discussion we have seen that we can introduce Pauli-Y terms in stabilizer measurements either by using type-II foliated qubits or by coupling an ancilla to multiple targets of the same foliated qubit, and we have seen that in general we can periodically measure a full set of check measurements by either lifting the targets of selected check measurements to higher time intervals or by coupling pairs of ancillas where it is appropriate. The examples of resource states we have considered here are by no means exhaustive and other variations can be made following the general principles of foliation given in the previous section, but given the multitude of variations one could come up with we leave further experimentation to the reader. One variation that is worth mentioning~\cite{Bombin17} is where additional checks are made at intermediate time intervals where we couple extra ancilla to physical qubits at layers indexed $X_j(t)$ and $Z_j(t+1)$. Of course, $V$ must also be modified to account for this change. The new foliated system requires more ancilla qubits along the defect line, and the valency of the physical qubits along each foliated wire is also increased, but in return we reduce the weight of the check operators and increase the number of available check operators. We discuss this compressed construction in Appendix~\ref{App:Compressed}.

\begin{figure}
	\includegraphics{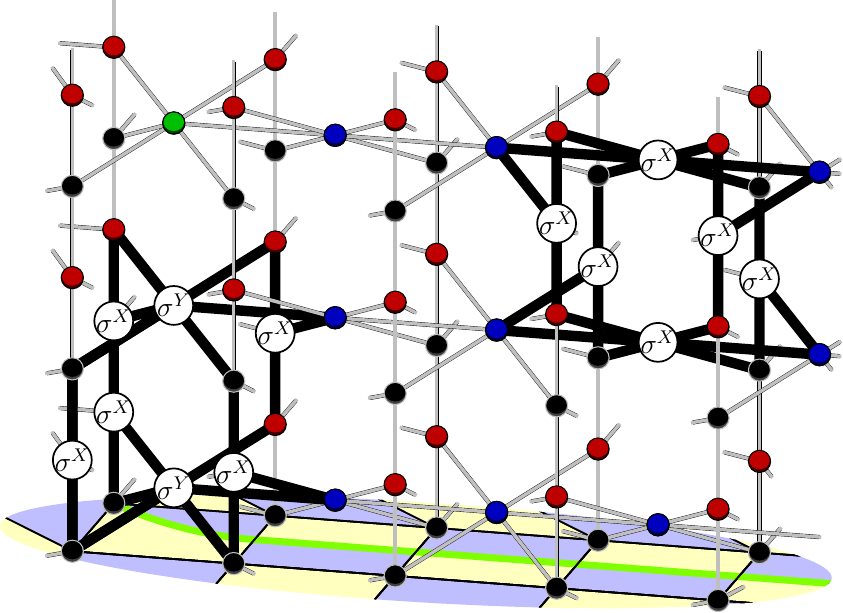}
	\caption{\label{Fig:CoupledAncilla} A foliation pattern for the defect line of the twisted surface code. The code is initialized at the bottom of the figure where the lattice that describes the stabilizers of the code are shown. Qubits that are coupled with a controlled-phase gate are connected by an edge. Qubits indexed $Z_j(t)$($X_j(t)$) are colored black(red), and ancilla qubits are colored in blue(green), if they are measured in the Pauli-X(Pauli-Y) basis. Example elements $S_{\text{bulk}}$ (as in Theorem~\ref{Thm:ChannelStabilizers}) which belong to $ \mathcal{S} =  \cent (\mathcal{F}) \cap \mathcal{R} \cap \mathcal{M}$ are also shown.}
\end{figure}

\section{Quantum computation with the foliated surface code}
\label{Sec:FTQC}

In what follows we explore the general theory we have established here by following the example of the foliated surface code. This model is a CSS stabilizer code and is thus foliated using the methods in Ref.~\cite{Bolt16} and has been studied from the perspective of computation in Refs.~\cite{Raussendorf07, Raussendorf07a,  Herr17b}. Beyond the work in the literature, using the generalized framework for foliation we have presented, we can additionally realize a phase gate deterministically in the foliated picture using ideas from Refs.~\cite{Brown17, Yoder17}. In particular, in Ref.~\cite{Brown17} it was shown that the corners of the planar code where two distinct types of boundary meet can be regarded as a Majorana mode. Throughout the discussion we give here we extend this analogy~\cite{Kitaev06, Bombin10, Brown13, Wootton15, Zheng15} further.

\begin{figure}
\includegraphics{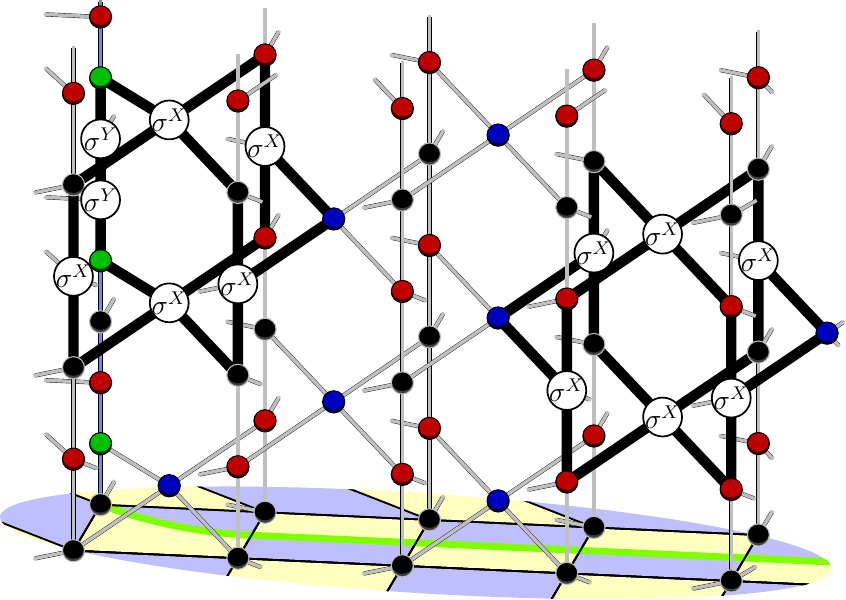}
\caption{\label{Fig:LiftedChecks} Resource state where we change the time interval of some of the targets of certain ancilla qubits as is described in the main text. Qubits indexed $X_j(t)$, $Y_j(t)$ and $Z_j(t)$ are colored red, green and black, respectively, and ancilla qubits are shown in blue. We also show the edges of the type-II foliated qubit in blue. No physical qubit has more than four incident edges. Two elements of $\mathcal{S} = \cent (\mathcal{F}) \cap \mathcal{R} \cap \mathcal{M}$ are shown.}
\end{figure}

We see that, in the spacetime picture provided by a foliation, that the corners of the planar code extend to world lines of Majorana modes. We find that the worldlines of the Majorana modes that live at the interface of different boundaries follow the trajectories we would expect if we were to realize fault-tolerant topological quantum computation by braiding Ising anyons~\cite{Moore91, Kitaev06, Brennen07, Nayak08, PreskillsLecture, Pachos12, Lahtinen17}. We show this analogy in a macroscopic picture in Fig.~\ref{Fig:Id} where we show the foliated planar code~\cite{Dennis02, Bolt16}. The planar code has rough boundaries and smooth boundaries, that appear as faces of the foliated system as they are extended along the time axis. The figure shows the different boundaries in blue and yellow. At the interface of the different boundary types we see the world line of a Majorana mode. In this picture the modes move vertically upwards with no horizontal motion. This is because with this channel we execute an identity gate. In what follows we show that generalizing the picture of foliation allows us to braid and fuse Majorana modes in the spacetime model. We see this by explicitly considering the initialization of arbitrary states, by showing lattice surgery by foliation, and by performing a fault-tolerant phase gate.

\subsection{initialization}

In order to realize fault-tolerant universal quantum computation with the surface code we require the ability to generate eigenstates of non-Pauli matrices for magic state distillation~\cite{Bravyi05}. This is similarly true for computation with the fault-tolerant cluster state model~\cite{Raussendorf07, Raussendorf07a}. State initialization by measurements with the surface code has been considered in Refs.~\cite{Landahl14, Lodyga15, Li15}. The work on initialization is readily adapted for the input-output model of fault-tolerant measurement-based quantum computation we have developed above.

Common to all of the references on initialization, the qubits that go on to form the surface code begin in some easily prepared state, such as a product state, which are then measured with the stabilizers of the desired code. In Fig.~\ref{Fig:initialization}(a) we show the initial state we prepare in order to initialize the surface code in an arbitrary state as given in Ref.~\cite{Lodyga15}. Qubits in the green(blue) boxes are initialized in known eigenstates of the Pauli-X(Pauli-Z) basis, and the central red qubit is prepared in an arbitrary state. 

\begin{figure}
\includegraphics{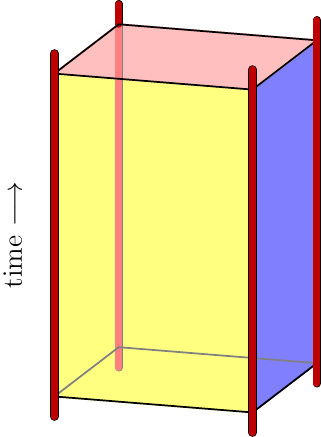}
\caption{\label{Fig:Id} The foliated planar code. The code is extended through spacetime where the rough and smooth boundaries are shown in yellow and blue, and the output is shown by the light red face. At the interface between the rough and smooth boundaries we see the world lines of Majorana modes, depicted by thick red lines. The four modes that encode the qubit move directly along the temporal axis, and as such do not nontrivially manipulate the encoded state over the channel.}
\end{figure}

By including this state as the stabilizer group $\Gin$ of the initial state $\mathcal{I}$, and the stabilizers of the surface code as the stabilizers measured through the channel, $\Gch$, we recover the graph state shown in Fig.~\ref{Fig:initialization}(b). The figure shows the input state on the layer closest to the reader and the time axis of the foliated model extends into the page. The qubits are colored using the convention in earlier sections where black(red) vertices mark qubits with indices $Z_j(t)$($X_j(t)$) and the blue vertices show ancilla qubits. The central green qubit is prepared in an arbitrary state. All of the qubits are measured in the Pauli-X basis\footnote{One could alternatively prepare the green qubit in a known eigenstate of Pauli-X and measure the green qubit in an arbitrary basis, the output will be the same.}. To help the qubits of the input system stand out we color the edges that connect them in light blue.

\begin{figure}
\includegraphics{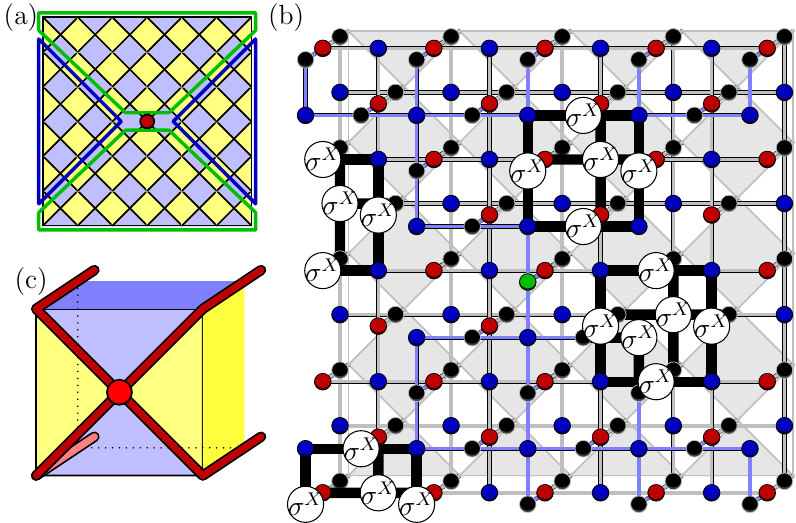}
\caption{\label{Fig:initialization} The input to encode an arbitrary state. (a)~The initial product state used to initialize the surface code from Ref.~\cite{Lodyga15}. Qubits lie on the vertices of the graph, and star(plaquette) operators lie on the blue(yellow) faces of the lattice. The system is initialized in a product state where all the qubits within the dark green(blue) boxes begin in the $+1$ eigenvalue eigenstate of the Pauli-X(Pauli-Z) operator, and the central red qubit is prepared in an arbitrary state. Measuring the stabilizers shown on the lattice initialize the code in the state of the central qubit. (b)~The graph state showing initialization where the initial state shown in (a) lies at the front of the figure, and the time axis of the foliated system extends into the page. Qubits initialized in an eigenstate of the Pauli-Z operator are not drawn, as these qubits do not entangle with the other qubits of the resource state. Examples of check operators (elements of $\cent (\mathcal{F}) \cap \mathcal{R} \cap \mathcal{M}$) are shown on the graph. (c)~initialization with the fault-tolerant cluster state shown macroscopically. Two pairs of Majorana modes are created at the site where the encoded state is initialized and their worldlines extend to the outer edges of the lattice in between the different boundaries which are determined by how the qubits of the initial system are prepared.}
\end{figure}

All of the qubits support check operators except the central qubit that is prepared in an initial state. We show examples of elements of $\cent(\mathcal{G})$ in the figure. We do not draw qubits that were initialized in an eigenstate of the Pauli-Z operator. Indeed, these states respond trivially to the action of the operator $U$ which is diagonal in the computational basis, and as such these qubits do not entangle nontrivially with the resource state. We are therefore free to neglect them.

It is interesting to examine initialization from the macroscopic viewpoint. It is well understood~\cite{Dennis02} that the surface code has two distinct boundaries, called rough and smooth boundaries, where the string-like logical Pauli-Z and Pauli-X operators, respectively, terminate. The initial state shown in Fig.~\ref{Fig:initialization}(a) is chosen such that both a logical Pauli-X and a logical Pauli-Z operator are supported on the physical qubits initialized in the Pauli-X(Pauli-Z) basis, where both operators intersect at the central qubit. Both of these logical operators extend along the surface where the input state is initialized onto the distinct boundaries where the logical operators terminate.

The respective boundaries of the foliated surface code extend along the temporal axis. Moreover, in the foliated system, the one-dimensional logical operators of the surface code are  extended along two-dimensional surfaces through the time axis onto the output system. The two-dimensional logical operators propagated along the time axis in this way are commonly known as correlation surfaces~\cite{Raussendorf06}. In the same way that the different logical operators of the surface code terminate at their distinct respective boundaries, so too do the correlation surfaces of the foliated surface code terminate exclusively at their respective boundary. The different boundaries of the foliated surface code, which is equivalent to the topological cluster state model~\cite{Raussendorf06}, are commonly known as primal or dual boundaries.

The boundary theory for the foliated surface code also extends to the side of the lattice where the system is initialized, see Fig.~\ref{Fig:initialization}(c). The primal and dual boundaries of the foliated surface code are distinct in the sense that they will only terminate the corresponding correlation surfaces of either the Pauli-X or Pauli-Z logical operator of the surface code. Likewise, when we initialize the surface code, the physical qubits are prepared in either the Pauli-X or Pauli-Z state depending on whether that area of the lattice should support either a logical Pauli-X or Pauli-Z operator upon initialization. In the foliated system, this choice of input state determines the type of correlation surface that terminates at this boundary. As such, we can consider an extension of the primal and dual boundaries on the side of the lattice where the system is initialized, that depends on how we initialized the qubits of the input surface code system. In Fig.~\ref{Fig:initialization}(c) we color code the boundaries of the foliated lattice according to the correlation surfaces that can terminate at them. In particular, correlation surfaces corresponding to the propagation of logical Pauli-X(Pauli-Z) operators of the surface code terminate at the blue(yellow) boundaries of the figure, respectively.

As discussed, we can regard the interface between the two different boundary types as worldlines of the Majorana modes. As Fig.~\ref{Fig:initialization}(c) shows, the four red lines meet at the single point where the nontrivial state is initialized. We thus see that the analogy between the boundaries of the foliated system and Majorana-based fault-tolerant quantum computation holds, as in such a system, to prepare an arbitrary state, two pairs of Majorana modes would need to be prepared simultaneously at a common location and noisily rotated into a desired state before the modes are separated such that the encoded information is topologically protected~\cite{Bravyi06, Karzig16}.

\begin{figure}
\includegraphics{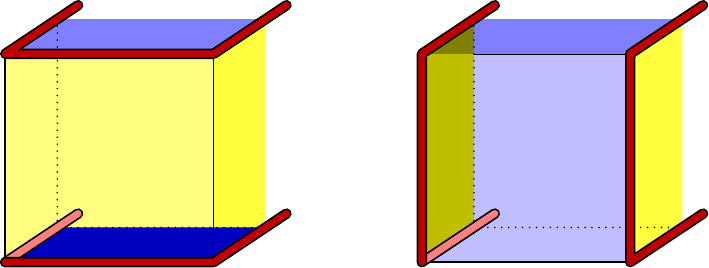}
\caption{\label{Fig:ZeroPlus} Macroscopic picture showing the fault-tolerant initialization of eigenstates of the logical Pauli operators. The time direction runs into the page. Initialization of the logical Pauli-Z (Pauli-X) operator is shown on the left-hand side (right-hand side) of the figure. The input state is prepared with the physical qubits initialized in known eigenstates of the Pauli-Z (Pauli-X) operator. The world lines of the Majorana modes that move between the distinct boundaries emerge in two well separated pair creation operations. This is true to the analogy with fault-tolerant quantum computation with Ising anyons.}
\end{figure}

We can similarly explore this analogy by preparing logical qubits in eigenstates of Pauli operators. With the surface code, we can fault-tolerantly prepare a logical qubit in an eigenstate of the Pauli-X(Pauli-Z) by initializing all of the physical qubits in the Pauli-X(Pauli-Z) basis. Using these product states as the input states, we can show the boundaries of the system macroscopically in Fig.~\ref{Fig:ZeroPlus} where the input state is shown by the boundary on the input face shown closest to the reader. In this instance, the figure shows two pairs of Majorana modes prepared a macroscopic distance from one another. This is similarly true with fault-tolerant quantum computation with Ising anyons, where these logical states are also prepared robustly in this way.

\subsection{Fault-tolerant parity measurements}

We next investigate the how lattice surgery~\cite{Horsman12, Landahl14, Brown17, Nautrup17} maps into the measurement-based picture. Lattice surgery offers a route to performing entangling gates between qubits encoded with topological codes via fault-tolerant parity measurements. Foliated lattice surgery with the surface code has already been considered in Ref.~\cite{Herr17b} but here we revisit this example within the more general framework we have developed for fault-tolerant measurement-based quantum computation. We will also witness nontrivial dynamics between the world lines of the Majorana modes that are present in the foliated surface code. We also remark that our discussion is readily extended to generalized lattice surgery where high-weight logical Pauli measurements are considered~\cite{Litinski19}.

We briefly review lattice surgery with the surface code before examining it at the macroscopic level of foliation. To the left and right of Fig.~\ref{Fig:LatticeSurgeryLattice} we see two surface code lattices in bold colors. Each of these encodes a single logical qubit. The goal is to make the logical parity measurement $\overline{Z}_1\overline{Z}_2$ of these two encoded qubits. The support of the operator we aim to measure is shown in the figure. It is important that the logical measurement is tolerant to local faults. To achieve this, we begin measuring the stabilizers of a single extended rectangular surface code, which includes the pale-colored stabilizers in between the two codes. Importantly, the logical operator of the two codes we intended to measure is a member of the stabilizer group of the extended code. As such, by measuring the stabilizers of the code we additionally recover the value of the parity measurement. 

\begin{figure}
\includegraphics{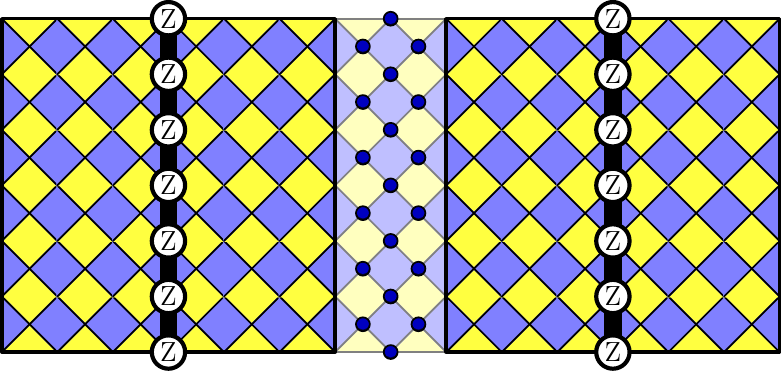}
\caption{\label{Fig:LatticeSurgeryLattice} Lattice surgery between two surface codes. Two surface code lattices are shown in bold colors to the left and the right of the figure. To measure the parity of these two logical qubits, we initialize the qubits that separate the two lattices in known eigenstates of the Pauli-X operator and then begin measuring the stabilizers of the extended rectangular lattice including the pale-colored stabilizers in between the two lattices.}
\end{figure}

To initialize the extended surface code, each of the physical qubits that lie in between the two encoded qubits are prepared in a known eigenstate of the Pauli-X operator and we then measure the stabilizers of the rectangular code. To the best of our knowledge, the literature thus far has only considered a very narrow separation between the two encoded lattices such that the number of qubits involved in the procedure is minimal. We know of no practical advantage of widening this gap but for the purposes of our exposition we find that the wider gap helps elucidate some of the topological features of lattice surgery.

We now consider lattice surgery within a measurement-based framework. We consider an input state $\Gin$ included in $\mathcal{I}$ with two surface code lattices together with some ancillary physical qubits prepared in the $+1$ eigenvalue eigenstate of the Pauli-X matrix as shown in bold in Fig.~\ref{Fig:LatticeSurgeryLattice}, where we might imagine that the two surface code lattices have emerged as the output of two foliated surface codes. We then append to the resource state the check measurements of the larger rectangular surface code such that the logical parity measurement is read from the resource state once the single-qubit measurement outcomes have been collected.

We show this foliated picture macroscopically in Fig.~\ref{Fig:Surgery}. The figure shows two foliated surface codes entering the surgery channel where the two lattices are connected to make the parity check. Once the parity check is completed, we take as an input the output of the channel of the rectangular code and input it back into the original channel where the two lattices are separated and the qubits that connect the two lattices are measured transversally in the Pauli-X basis, completing the operation.

We next study the trajectories of the Majorana modes while the parity measurement takes place. Given that the qubits that form the connection between the two codes are initialized in the Pauli-X basis, the boundary where the connection is formed terminates the correlation surface that propagates the logical Pauli-X data of the channel. As such, we regard this boundary as an extension of the blue boundaries in the figure where the correlation surfaces for the logical Pauli-X operators can terminate. An equivalent argument holds at the moment the connection is broken. From this we can infer the trajectories of the Majorana modes.

At the point the two codes are connected we observe the world lines of the Majorana modes that mark the interface between the distinct boundaries fuse. The product of these two fusion operations gives the parity of the two encoded qubits which is consistent with the topological interpretation of lattice surgery given in Ref.~\cite{Brown17}. We further remark that neither of the individual fusion measurements between pairs of Majorana modes reveal any of the logical information of the system.

\begin{figure}
\includegraphics{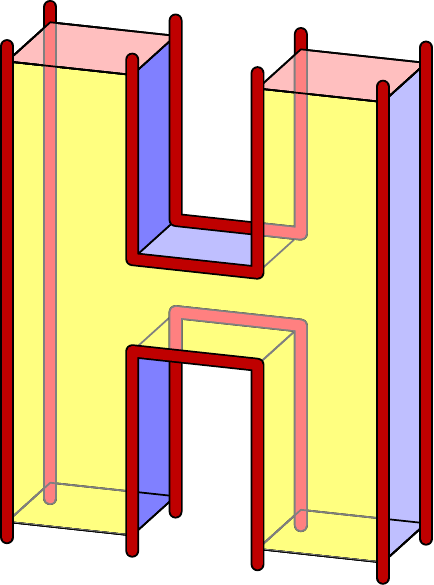}
\caption{\label{Fig:Surgery} Lattice surgery for the parity measurement $\overline{Z}_1 \overline{Z}_2$ with the foliated surface code. The time axis runs from the bottom of the page to the top. We show two distinct surface codes at the bottom of the page. At the point where we begin the parity measurement the two foliated qubits are connected through an extended piece of foliated surface code. We show the trajectories of the foliated Majorana modes in red. We see that at the connection point two pairs of Majorana modes are fused, where one mode from each pair are taken from the two different foliated codes. Once the parity data is collected the connection is broken by inputing the rectangular lattice into a new channel to separate the two encoded qubits. This is shown at the top of the figure.}
\end{figure}

More precisely, if we consider a system where two qubits are encoded over eight Majorana modes, $\gamma_j$ with $1 \le j \le 8$, and $\overline{X}_1 =\text{i} \gamma_1 \gamma_3$, $\overline{Z}_1 =\text{i} \gamma_3 \gamma_4$, $\overline{X}_2 =\text{i} \gamma_6 \gamma_8$ and $\overline{Z}_2 =\text{i} \gamma_5 \gamma_6$, and we make measurements $M_1 = \text{i} \gamma_3 \gamma_5$ and $M_2 = \text{i} \gamma_4 \gamma_6$ we have that $\overline{Z}_1 \overline{Z}_2 = M_1 M_2$. One may worry that the logical measurement of $M_1$ and $M_2$ may affect encoded information, but in fact this measurement only disturbs the global charge conservation of the two encoded qubits, $\gamma_1\gamma_2\gamma_3\gamma_4$ and $\gamma_5\gamma_6\gamma_7\gamma_8$. These operators could be regarded as gauge degrees of freedom. It is clear from Fig.~\ref{Fig:Surgery} that lattice surgery is performing an analogous operation with the foliated Majorana modes.

\subsection{A phase gate}
We finally show how to perform a phase gate with a surface code that is propagated through a resource state. This presents an interesting example as we require the composition of several channels to complete this operation. Moreover, we will observe a braid in the trajectories of the foliated Majorana modes which is true to the analogy we have painted alongside fault-tolerant quantum computation with anyons. The gate we use is based on a method presented with stabilizer codes in Ref.~\cite{Yoder17}, but we point out that the general theory of foliated quantum computation is readily adapted to other proposals to realize Clifford gates~\cite{Hastings15, Brown17} including other schemes presented in Ref.~\cite{Yoder17}. See also recent work in Ref.~\cite{Landahl14, Kesselring18}. As in the previous Subsection we will present the scheme at the level of stabilizer codes before discussing the foliated variant of the logical gate.

We first summarise the execution of a phase gate abstractly at the logical level. A phase gate maps logical operators such that $\overline{X} \rightarrow \overline{Y}$ and $\overline{Y} \rightarrow \overline{X}$ where phases are neglected, and the logical Pauli-Z operator is invariant under a phase rotation. Using an additional ancillary qubit we can achieve this operation by code deformation. If we encode the logical information on the first logical qubit, and we prepare the second qubit in an eigenstate of the logical Pauli-Z operator, then one can check that performing the following sequence of measurements; $\overline{Y}_1\overline{X}_2$, $\overline{Z}_1$, $\overline{X}_1\overline{X}_2$, $\overline{Z}_2$ will complete a phase gate up to phases which are determined by the outcomes of the measurements. In what follows we show how to make these fault-tolerant logical measurements using two surface codes. 

\begin{figure}
\includegraphics{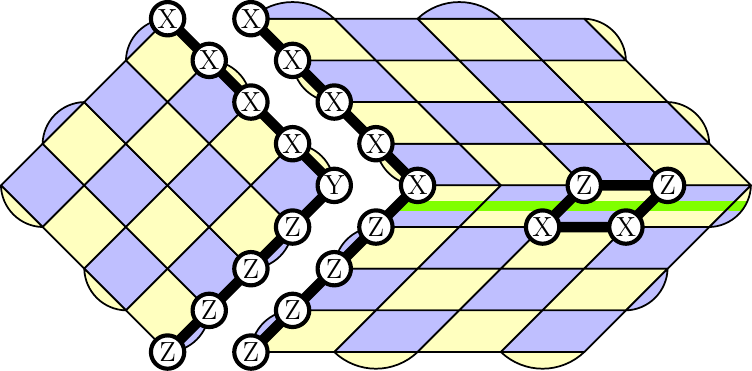}
\caption{\label{Fig:SC1} Two surface codes where Pauli-X type stabilizers lie on blue faces and Pauli-Z stabilizers on yellow faces. A dislocation is shown in green running through the middle of the right surface code which are the product of two Pauli-X terms and two Pauli-Z terms. An example of such a stabilizer is shown. The left qubit, qubit 1, is initially encoded in an arbitrary state, and the right qubit, qubit 2, is prepared in an eigenstate of the logical Pauli-Z operator. The logical operator whose measurement outcome is inferred from the stabilizer measurements during lattice surgery is shown explicitly.}
\end{figure}

We must first measure $\overline{Y}_1 \overline{X}_2$ where the ancilla qubit is prepared in an eigenstate of the Pauli-Z operator. We consider the initial system shown in Fig.~\ref{Fig:SC1} where logical information is encoded on the lattice shown to the left and the surface code to the right is initialized in an eigenstate of the logical Pauli-Z operator. An important feature of the surface code at the right of the figure is that it has a continuous defect line running through the middle of the lattice, but we remark that the model is locally equivalent to the well-known CSS variant of the surface code on a rectangular lattice. Once the system is prepared in this state, by measuring this system with the stabilizers of the twisted surface code as shown in Fig.~\ref{Fig:Twisted}, we recover the value of the desired logical parity measurement. This is because the operator of the initial system, $\overline{Y}_1 \overline{X}_2$, is an element of the stabilizer group of the twisted surface code. To make this clear we show the operator $\overline{Y}_1 \overline{X}_2$ explicitly on Fig.~\ref{Fig:SC1}.

It is worth pointing out that, similar to the standard surface code which, as explained above, can be fault-tolerantly initialized in an eigenstate of the Pauli-Z basis, the lattice to the right is readily initialized in an eigenstate of the Pauli-Z operator by initializing the physical qubits above(below) the defect line in a known eigenstate of the Pauli-Z(Pauli-X) basis before measuring the stabilizers of the model to complete the preparation.

Once we have performed the first parity measurement the remaining steps to complete the phase gate have already been well described in the literature, we thus only briefly summarise the remaining technical steps. We must first measure the first system in the logical Pauli-Z basis. This is achieved by measuring all of the physical qubits transversally in either the Pauli-Y or Pauli-Z basis, where the bulk of the system is measured in the Pauli-Z basis, with the exception of a single line of qubits that are measured in the Pauli-Y basis. The Pauli-Y measurements move the central twist shown in Fig.~9 to the bottom left-hand corner of second rectangular surface code as is explained in Ref.~\cite{Brown17}. The outcome of the measurement can be inferred from the single-qubit measurement outcomes. This leaves the logical information encoded on the second logical qubit on the rectangular lattice.

To make the final parity measurement, we additionally require that we reduce the length of the rectangular lattice. This is also achieved by transversally measuring the qubits below the defect line on the rectangular lattice transversally in the Pauli-Z basis. After we have completed all of the transversal measurements we reinitialize the first system in an eigenstate of the logical Pauli-Z operator which is carried out by preparing all of the physical qubits in the Pauli-Z basis and subsequently measuring standard surface code stabilizers.

\begin{figure}
\includegraphics{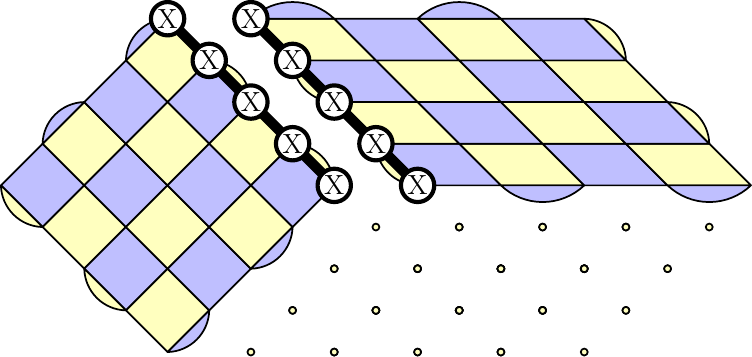}
\caption{\label{Fig:SC3} After transversally measuring the qubits below the defect line of the lattice to the right transversally in the computational basis the right qubit is reduced to a smaller square lattice. The system to the left is initialized in a logical eigenstate of the Pauli-Z operator. The support of the logical parity measurement that is made under the surgery is shown.}
\end{figure}

Upon completing these three operations, the first two of which could be carried out simultaneously, we end with the system shown in Fig.~\ref{Fig:SC3}. Finally, to transfer the rotated logical information back to the first lattice, we perform another logical parity measurement, $\overline{X}_1 \overline{X}_2$ which is carried out using standard lattice surgery that we discussed in the previous subsection before finally measuring the remaining qubits of the second ancillary system transversally in the Pauli-Z basis which completes the final logical measurement $\overline{Z}_2$.

As in the case of lattice surgery, the steps of the code deformation procedure outlined above are readily mapped onto a foliated system by using each new deformation as the stabilizers of the channel system of a resource state such that the output is the deformed variant of the input state. The phase gate we have presented provides a particularly interesting example as it shows that certain gates are achieved using several different channels. We also remark that several of these channels require the general methods of code foliation that we have developed in the earlier sections of this work. The channel where we measure the stabilizers of the twisted surface code for instance can be foliated using the graph states proposed in Sec.~\ref{Sec:TwistedSurfaceCode}.

\begin{figure}
\includegraphics{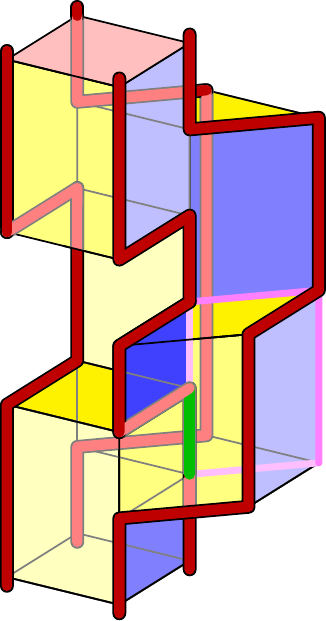}
\caption{\label{Fig:TwistExchange} The sequence of foliated channels that execute a phase gate on a qubit transmitted with a foliated surface code. Time increases up the page. The worldlines of the Majorana modes are shown in red(green) if they run through the exterior boundary(bulk) of the foliated system and the boundary of the dislocation of the foliated surface code is shown in pink. One can track the world lines of the modes from the bottom to the top of the page to see the two right most world lines exchange.}
\end{figure}

We show the series of channels in Fig.~\ref{Fig:TwistExchange} where the temporal axis increases up the page. Interestingly, we observe that the two right-most Majorana modes at the bottom of the figure are exchanged over time, where the right-most mode moves through the interior of the resource state through the channel that measures the stabilizers of the twisted surface code. The point where the world line of this mode moves through the interior is highlighted in green. To the best of our knowledge, this presents the first example of a foliated system where a defect is moved through the interior of a resource state. While this defect is in the interior of the system, the other mode involved in the exchange is braided around the exterior of the system before the exchange is completed thus executing a phase gate. Once again, the analogy with Majorana modes holds in the model we consider here as, indeed, exchanging a pair of Majorana modes executes phase gate. This can be seen by considering a qubit encoded with four Majorana modes such that $\overline{X} = \text{i}\gamma_1 \gamma_3$, $\overline{Y} = \text{i} \gamma_2 \gamma_3$ and $\overline{Z} = \text{i} \gamma_1 \gamma_2$. One can see that up to phases $\overline{X}$ and $\overline{Y}$ differ by the exchange of indices $1$ and $2$ whereas $\overline{Z}$ is invariant under the exchange modulo a negative phase. The operation is equivalent to the exchange of the modes in the foliated channel.

\section{Concluding remarks}
We have presented a framework that allows us to map quantum computational schemes that use code deformations on stabilizer codes to schemes of fault-tolerant measurement-based quantum computation. As an example, we have used this framework to show we can initialize, fuse and braid foliated Majorana defects in a three-dimensional fault-tolerant cluster-state model to carry out universal quantum computation with Clifford operations and magic state distillation.

It remains an important problem to find the most resource efficient models of fault-tolerant quantum computation. As such it will be fruitful to study the robustness of other foliated stabilizer codes to experimentally relevant sources of noise. In particular, it will be valuable to study the tolerance of different models to loss, as foliated models are most applicable to photonic architectures where this is a dominant source of error~\cite{Rudolph16}. One may also be able to adapt known results using the circuit-based model to find fault-tolerance thresholds for the measurement-based methods of fault-tolerant quantum computation presented here~\cite{Aliferis05}. Recent work in Refs.~\cite{Bolt18, Brown19} may offer a way to find decoding algorithms for generic foliated codes that will be necessary to determine fault-tolerant threshold error rates.

From a condensed matter perspective, the models we have constructed can be viewed as symmetry-protected topological phases~\cite{TrithepSPT, roberts2018symmetry,YoshidaCCSPT,YoshidaHigher,ThermalSPT,raussendorf2018computationally, roberts2019symmetry}. Further study of these models may therefore lead to new phases of matter that are robust resources for quantum information processing tasks. Finally, one can readily check that we can foliate the canonical examples of subsystem codes~\cite{Bacon06, Bombin10a, Bravyi13, Bombin15} within our framework. We check this by replacing the channel stabilizer group by a generating set of the gauge group, such that the foliated system inherits many of the desirable features of the subsystem code. We give a discussion on this in Appendix~\ref{App:SubsystemFoliation}. While the general theory of subsystem code foliation remains to be described explicitly, we find it exciting to map the advantageous characteristics of these models such as gauge fixing~\cite{Paetznick13, Bombin15, Yoder17a} and single-shot error-correction~\cite{Bombin15a, Brown16a, Campbell18,fawzi2018constant} into foliated systems in the future.

\begin{acknowledgements}
The authors are grateful to H. Bomb\'{i}n for discussions on the connection between code deformation and gauge fixing, to N. Nickerson for conversations about foliation and the topological cluster state model at the earlier stages of this project, to M. Gimeno-Segovia for directing us through the photonic quantum computing literature, and to M. Newman and K. Seetharam for comments on our work. The authors are supported by the Australian Research Council via the Centre of Excellence in Engineered Quantum Systems(EQUS) project number CE170100009. BJB acknowledges support from the University of Sydney Fellowship Programme. SR is also supported by the Australian Institute for Nanoscale Science and Technology Postgraduate Scholarship (John Makepeace Bennett Gift).
\end{acknowledgements}

\appendix

\section{Type-II foliated qubits}
\label{App:TypeII}

In this Appendix we consider a variation of the foliated qubit where we perform single-qubit Pauli-Y measurements instead of Pauli-X measurements to move information along a one-dimensional cluster state. We refer to these as type-II foliated qubits. Qubit foliation by this method offers a natural way to measure the encoded Pauli-Y information by coupling an ancilla to just one qubit of the chain. This may be of practical benefit as it can reduce the valency of the graph of a resource state at the expense of including three qubits in each time interval of the foliated qubit instead of two in the case of a type-I folated qubit. 

One can produce a foliated system $\mathcal{F}$ based on type-II qubits by following the prescription in Sec.~\ref{Sec:CodeFoliation} where some or all of the foliated qubits in the channel $\mathcal{K}$ are replaced by type-II foliated qubits, and then producing a new resource $\mathcal{R}$ and measurement pattern $\mathcal{M}$ to perform the required parity measurements. Theorem~\ref{Thm:OutputCode} and Theorem~\ref{Thm:ChannelStabilizers} are readily modified to accommodate this change.

We begin with the cluster state of length $N = 3D+1$ with stabilizer group defined in Eqns.~(\ref{Eqn:ClusterStab1}) and~(\ref{Eqn:ClusterStab2}) and logical operators defined in Eqn.~(\ref{Eqn:ClusterLogicals}) such that we can measure the logical Pauli-Z information with a single-qubit Pauli-Z measurement on the first qubit. 

We notice the difference between type-I and type-II foliated qubits by first looking at what happens if we make a Pauli-Y measurement instead of a Pauli-X measurement on the first qubit. We first multiply both logical operators by the stabilizer $C[2] = \sigmaz{1} \sigmax{2} \sigmaz{3}$ such that both logical operators commute with $M_1 = \sigmay{1}$. We have
\begin{equation}
X \sim  \sigmay{1} \sigmay{2} \sigmaz{3}, \quad Z \sim \sigmax{2} \sigmaz{3},
\end{equation}
With these logical operators it is easily checked that measuring $M_1 = \sigmay{1}$, which becomes the stabilizer $C[1] = y_1 \sigmay{1}$, where $y_1 = \pm 1$ is the random measurement outcome, that after the measurement we have the logical operators
\begin{equation}
X \sim y_1 \sigmay{2} \sigmaz{3}, \quad Z \sim \sigmax{2} \sigmaz{3}.\label{Eqn:LogicalXZ}
\end{equation}
Notably, we also have that
\begin{equation}
Y= i X Z \sim y_1 \sigmaz{2},
\end{equation}
using the logical operator expressions given in Eqn.~(\ref{Eqn:LogicalXZ}). From this equation it is easily seen that logical Pauli-Y information can be accessed by making a single-qubit Pauli-Z measurement on the second qubit provided the first qubit is measured in the Pauli-Y basis. 

One can then check that measuring the second qubit in the Pauli-Y basis, whose measurement outcome is  $y_2 = \pm 1$, we obtain
\begin{equation}
X \sim y_1 y_2 \sigmaz{3}, \quad Z \sim y_2 \sigmay{3} \sigmaz{4},
\end{equation}
and in turn $ Y \sim y_1 \sigmax{3} \sigmaz{4}$. With the example of measuring the first two qubits along the cluster state in the Pauli-Y basis, we see that we cyclicly permute the logical operator that can be accessed with a single-qubit Pauli-Z measurement as we progress along the chain. In contrast, the type-I foliated qubit exchanges the information that is accessible by single-qubit measurements between logical Pauli-X and Pauli-Z data.

\begin{figure}
\includegraphics{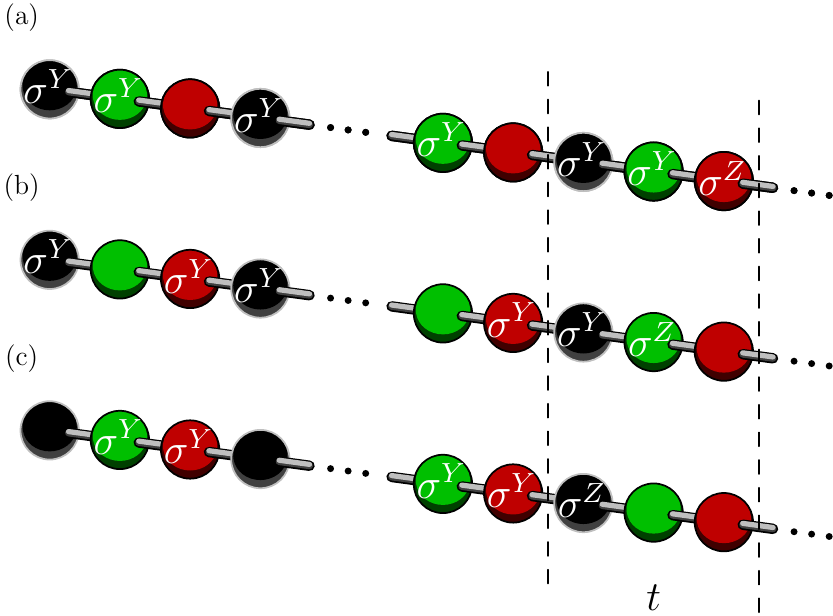}
\caption{\label{Fig:TypeIIinterval} We show the logical Pauli-X, Pauli-Y and Pauli-Z operators in (a), (b)~and (c)~respectively. From these diagrams we see that by measuring the appropriate qubit within the time interval in the Pauli-Z basis, and otherwise measuring the preceeding qubits of the system in the Pauli-Y basis, we can measure the logical information of the chain in an arbitrary Pauli basis.}
\end{figure}

As in the case of the type-I foliated qubit, it is convenient to redefine the indices of the system in terms of intervals, indexed by $t$. For type-II foliated qubits we define
 \begin{equation}
 X(t) = 3t ,\quad Y(t) = 3t-1,\quad Z(t) = 3t - 2
 \end{equation}
In this case we have intervals of three adjacent qubits where either the logical Pauli-X, Pauli-Y, or Pauli-Z measurement information can be accessed via single qubit measurements. Specifically, we have that
\begin{equation}
 X \sim \Sigma^X(t) \sigmaz{X(t)},  \quad Y \sim \Sigma^Y(t) \sigmaz{Y(t)} ,
\end{equation}
and
\begin{equation}
 Z \sim \Sigma^Z(t) \sigmaz{Z(t)},
\end{equation}
where we have defined
 \begin{equation}
 \Sigma^X(t) = \prod_{\mu = 1}^{t} \sigmay{Y(\mu)} \sigmay{Z(\mu)} ,\end{equation} 
 \begin{equation}
 \Sigma^Y(t) = \sigmay{Z(t)} \prod_{\mu = 1}^{t-1} \sigmay{X(\mu)} \sigmay{Z(\mu)},
\end{equation}
 and 
\begin{equation} 
\Sigma^Z(t) =  \prod_{\mu = 1}^{t-1} \sigmay{X(\mu)} \sigmay{Y(\mu)}.
\end{equation}
which commute with the measurement pattern of type-II foliated qubits. We show a time interval for a type-II foliated qubit in Fig.~\ref{Fig:TypeIIinterval}.

\subsection{Ancilla-assisted measurement}

We can also perform measurements on the logical qubit propagated along a type-II foliated qubit. In Fig.~\ref{Fig:NonDestructiveMeasurement} we show the operator $\Sigma^Y(t)\sigmax{a} \in \mathcal{M}$ where we have the measurement pattern of a type-II foliated qubit, namely,
\begin{equation}
\mathcal{M}^C = \left\langle  \left\{ \sigmay{\mu}\right\} _{\mu=1}^{N-1}\right\rangle
\end{equation}
and the blue ancilla qubit which is measured in the Pauli-X basis is coupled to target $T = Y(t)$.

Of course, type-II foliated qubits can support parity measurements in a foliated system of multiple foliated qubits using the methods specified above by means of an ancilla. One caveat of which is that we must also modify the measurement basis of the ancilla accordingly. Indeed, we previously considered measuring $M[a] \in \mathcal{M}^A$ where we have thus far considered the case where $M[a] = \sigmax{a} \sigmaz{a}^{p^X\cdot p^Z}$ to measure $P$ such that $p = v(P)$. We adapt this measurement from $\sigmax{a}$ to $\sigmay{a}$ or vice versa for each type-II qubit where we replace the coupling from ancilla $a$ to both $X_j(t)$ and $Z_j(t)$ with a single $Y_j(t)$ coupling.

\begin{figure}
	\includegraphics{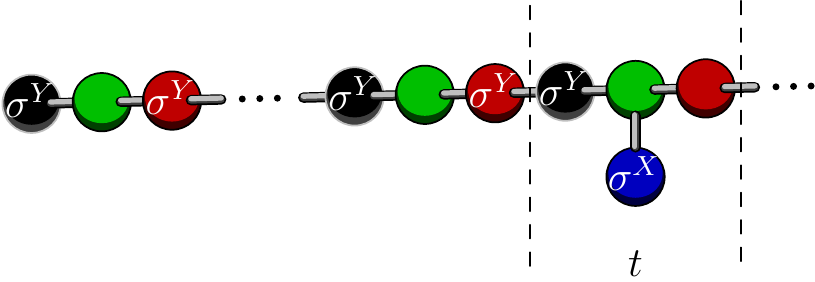}
	\caption{\label{Fig:NonDestructiveMeasurement} The graph for a type-II foliated qubit where we make a non-destructive logical Pauli-Y measurement at time interval $t$. The blue ancilla qubit is coupled to the chain element indexed $T = Y(t)$ via a controlled-phase gate. We show the logical Pauli-Y operator that commutes with the single-qubit measurement pattern wherethe ancilla is measured in the Pauli-X basis and otherwise, the qubits in the chain are measured in the Pauli-Y basis, as prescribed.}
\end{figure}

\section{Stabilizers and logical operations of the foliated system}\label{App:ThmProofs}

In this Appendix we prove Theorem~\ref{Thm:OutputCode} and~\ref{Thm:ChannelStabilizers} of Sec.~\ref{Sec:CodeFoliation}. We first define two Pauli operators that we use throughout this appendix and we prove two lemmas before moving onto the proofs of the theorems in the main text.

We will frequently consider the action of operators $R_P$ and $R'_P(t)$ on the resource state. We will be specifically interested in the commutation relations of these operators with the degrees of freedom introduced to $\mathcal{R}$ by the input code $\Gin$. We remeber that these degrees of freedom were initialized on qubits indexed $Z_j(1)$ on the initial state $\mathcal{I}$. They are defined as follows
\begin{eqnarray}
R_P &=& \prod_{G \in \Gres}  \prod_{t \le D}  \sigmaz{G(t)}^{\Upsilon( p,  g)} \label{Eqn:PinR}  \\
&& \times \prod_{j \in \supp{p^X}}\Sigma^X_j(D) \sigmax{Z_j(D+1)}  \nonumber  \\
&& \times \prod_{j \in \supp{p^Z}}\Sigma^Z_j(D+1)\sigmaz{Z_j(D+1)}, \nonumber 
\end{eqnarray}
and
\begin{eqnarray}
R'_P(t) &=& \prod_{G \in \Gres}  \left( \sigmaz{G(t)}^{p^X \cdot g^Z} \prod_{t' < t}  \sigmaz{G(t')}^{\Upsilon( p,  g)} \right) \nonumber \\
&& \times  \prod_{j \in \supp{p^X}}\Sigma^X_j(t) \sigmaz{X_j(t)} \label{Eqn:PinRt}  \\
&& \times  \prod_{j \in \supp{p^Z}}\Sigma^Z_j(t)\sigmaz{Z_j(t)}, \nonumber 
\end{eqnarray}
where $P$ is a Pauli operator with corresponding vector representation $(p^X \, p^Z)^T = v(P) $ and $(g^X \, g^Z) = v(G)$ for each $G \in \Gres$.

The relationship between operators $R_P$ and $R'_P(t)$ and degrees of freedom specified by the input code $\Gin$ are described by the following lemma
\begin{lem} \label{Lem:Channel}
Operators $R_P$ and $R_P'(t)$ act on code states of the resource state, $\mathcal{R}$, denoted $| \psi \rangle $, such that
\begin{equation}
R_P | \psi \rangle = R'_P(t)|\psi \rangle = U^A U^C I_P |\phi \rangle 
\end{equation}
for code states $|\phi \rangle = \left( {U^A}{U^C}\right)^\dagger |\psi \rangle$ of the initial stabilizer code $\mathcal{I} \otimes \mathcal{A}$ 
where
\begin{equation}
I_P = \prod_{j \in \left|p^X\right|} \sigmax{Z_j(1)} \prod_{j \in \left|p^Z\right|} \sigmaz{Z_j(1)}.  \label{Eqn:IinR}
\end{equation}

\end{lem}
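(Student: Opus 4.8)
\emph{Proof plan.} The starting observation is that, by the definitions $\chan = U^C \mathcal{I}$ and $\mathcal{R} = U^A(\chan \otimes \mathcal{A})$, every code state of the resource state arises as $|\psi\rangle = U^A U^C |\phi\rangle$ for a code state $|\phi\rangle$ of the initial group $\mathcal{I}\otimes\mathcal{A}$; this is exactly the $|\phi\rangle$ named in the statement. Substituting $|\psi\rangle = U^A U^C|\phi\rangle$, the lemma is equivalent to the conjugated operators $(U^A U^C)^\dagger R_P\, U^A U^C$ and $(U^A U^C)^\dagger R'_P(t)\, U^A U^C$ both acting as $I_P$ on the code space of $\mathcal{I}\otimes\mathcal{A}$. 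Since a Pauli operator acts on that code space as $I_P$ precisely when it equals $I_P$ times a $+1$ stabilizer generated by the prepared operators $\sigmax{X_j(t)},\sigmax{Z_j(t)}\in\mathcal{I}$ and $\sigmax{G(t)}\in\mathcal{A}$, it suffices to show that conjugation by $(U^A U^C)^\dagger = U^C U^A$ (both unitaries being products of self-inverse controlled-phase gates) reduces $R_P$ and $R'_P(t)$ to $I_P$ modulo such stabilizers.

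The plan is to carry out the conjugation in two stages, first by $U^A$ and then by $U^C$, using only that $U^Z$ commutes with every $\sigmaz{}$ and that $U^Z[\mu,\nu]\sigmax{\mu}U^Z[\mu,\nu]=\sigmax{\mu}\sigmaz{\nu}$. Conjugating by $U^A$: because neither $R_P$ nor $R'_P(t)$ contains any ancilla Pauli-X or Pauli-Y, the ancilla-ancilla couplings $V$ pass through trivially, and the only effect of the couplings $U[G(t)]$ is that each Pauli-X factor of the strings $\Sigma^X_j$ and $\Sigma^Z_j$ meeting a target of $G(t)$ deposits a factor $\sigmaz{G(t)}$. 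Summing these contributions over $j$ gives the ancilla exponents $p^X\cdot g^Z$ and $p^Z\cdot g^X$, whose combination across time layers is precisely $\Upsilon(p,g)=p^X\cdot g^Z+p^Z\cdot g^X$. This is exactly the content of the prefactors $\sigmaz{G(t)}^{\Upsilon(p,g)}$ in Eqn.~(\ref{Eqn:PinR}) and $\sigmaz{G(t)}^{p^X\cdot g^Z}\prod_{t'<t}\sigmaz{G(t')}^{\Upsilon(p,g)}$ in Eqn.~(\ref{Eqn:PinRt}): the deposited factors cancel the prefactors modulo two, leaving a purely channel-supported operator, namely the channel logical representative of $P$.

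It then remains to conjugate this channel representative by $U^C=\bigotimes_j U_j$ and read off $I_P$. Here I would apply the single-chain identities for $U_j$ on each wire: conjugating $\Sigma^X_j(t)\sigmaz{X_j(t)}$ by $U_j$ telescopes the induced $\sigmaz{X_j(\mu)}$ factors so that the operator collapses to $\prod_{\mu}\sigmax{Z_j(\mu)}=\sigmax{Z_j(1)}\cdot\prod_{\mu\ge2}\sigmax{Z_j(\mu)}$, and likewise $\Sigma^Z_j(t)\sigmaz{Z_j(t)}$ collapses to $\sigmaz{Z_j(1)}\cdot\prod_{\mu}\sigmax{X_j(\mu)}$; the identical computation applied to the output representatives $\Sigma^X_j(D)\sigmax{Z_j(D+1)}$ and $\Sigma^Z_j(D+1)\sigmaz{Z_j(D+1)}$ of $R_P$ yields the same leading terms $\sigmax{Z_j(1)}$ and $\sigmaz{Z_j(1)}$. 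In every case the trailing factors are exactly the prepared generators $\sigmax{X_j(t)},\sigmax{Z_j(t)}$ of $\mathcal{I}$ (valid in the ranges fixed by Def.~\ref{Def:Channel}), hence lie in the stabilizer and may be dropped on code states. Collecting the surviving $\sigmax{Z_j(1)}$ over $j\in\supp{p^X}$ and $\sigmaz{Z_j(1)}$ over $j\in\supp{p^Z}$ reconstructs $I_P$ of Eqn.~(\ref{Eqn:IinR}), for both $R_P$ and $R'_P(t)$, which simultaneously shows the two representatives act identically and gives the claimed value $U^A U^C I_P |\phi\rangle$.

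I expect the main obstacle to be the ancilla bookkeeping in the first stage: one must track the $\sigmaz{G(t)}$ factors layer by layer and verify that the symplectic exponents $\Upsilon(p,g)$ and $p^X\cdot g^Z$ built into $R_P$ and $R'_P(t)$ cancel them exactly, while keeping the phase conventions for the combined $\sigmax{Z_j(1)}\sigmaz{Z_j(1)}$ factors arising when $j\in\supp{p^X}\cap\supp{p^Z}$ consistent on both sides. The channel-side telescoping, by contrast, is routine once the single-chain identity is in place.
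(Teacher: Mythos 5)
Your proposal is correct and follows essentially the same route as the paper's proof: both rest on the identity $U^Z[\mu,\nu]\sigmax{\mu}U^Z[\mu,\nu]=\sigmax{\mu}\sigmaz{\nu}$, the telescoping of the induced $\sigma^Z$ factors along each chain, the matching of the ancilla exponents to $\Upsilon(p,g)$ and $p^X\cdot g^Z$, and absorption of the prepared $\sigma^X$ generators as stabilizers. The only difference is cosmetic — you conjugate $R_P$ and $R'_P(t)$ backwards through $U^A$ then $U^C$ to recover $I_P$, whereas the paper pushes $I_P$ forward through $U^C$, multiplies by stabilizers of $\mathcal{K}$ to obtain $K_P$ and $K'_P(t)$, and then conjugates by $U^A$.
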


\begin{proof}
The action of unitary operator $U^C$ on $I_P |\phi \rangle$ gives
$\left( U^C I_P {U^C}^\dagger \right) U^C |\phi \rangle = U^C I_P |\phi \rangle $
where, written explicitly, we have
\begin{equation}
U^C I_P {U^C}^\dagger = \! \prod_{j \in \left|p^X\right|} \! \sigmax{Z_j(1)}  \sigmaz{X_j(1)} \! \prod_{j \in \left|p^Z\right|} \! \sigmaz{Z_j(1)}, \label{Eqn:IntermediatePinK}
\end{equation}
and $U^C |\phi \rangle$ is a codestate of $\chan$. Multiplying by elements of the stabilizer group $\mathcal{K}$ we have $K_P ,\, K_P'(t) \sim U^C I_P {U^C}^\dagger $ such that
\begin{eqnarray}
K_P &=& \prod_{j \in \supp{p^X}} \Sigma^X_j(D)\sigmax{Z_j(D+1)}  \label{Eqn:PinK} \\ 
&& \times  \prod_{j \in \supp{p^Z}} \Sigma^Z_j(D+1)\sigmaz{Z_j(D+1)}. \nonumber 
\end{eqnarray}
and
\begin{equation}
K'_P(t) = \prod_{j \in \supp{p^X}} \Sigma^X_j(t)\sigmaz{X_j(t)}  \prod_{j \in \supp{p^Z}} \Sigma^Z_j(t)\sigmaz{Z_j(t)}. \label{Eqn:PinKt}
\end{equation}
By the definition of stabilizer operators, both $K_P$ and $K'_P(t)$ of Eqns.~(\ref{Eqn:PinK}) and~(\ref{Eqn:PinKt}) acts equivalently on the codespace of $\chan$ to $U^C I_P {U^C}^\dagger$ Eqn.~(\ref{Eqn:IntermediatePinK}).

We obtain Eqns.~(\ref{Eqn:PinR}) and~(\ref{Eqn:PinRt}) by conjugating $K_P$ and $K_P'(t)$ with $U^A$ such that $R_P = U^A K_P {U^A}^\dagger$ and $R_P'(t) = U^A K'_P(t) {U^A}^\dagger$. Lemma~\ref{Lem:Channel} holds by the unitarity of $U^A$.
\end{proof}

Broadly speaking, Lemma~\ref{Lem:Channel} shows how operators $P$ supported on qubits $Z_j(1)$ of the initial system are supported on $\mathcal{R}$ under conjugation by unitary $U^CU^A$. It is particularly useful for operators $P \in \cent(\Gch)$ whereby $\Upsilon(p,g) = 0$. Nevertheless, it will also be important to determine if $I_P \in \mathcal{I}$ is measured by the resource state, i.e., if there are operators $\sim R_P'(t)$ that are members of $\mathcal{M}$. As such, we must also consider elements of $\mathcal{R}$ that are obtained by conjugation of elements of the ancilla system $\mathcal{A}$ with entangling unitary $U^A$. This is characterized by Lemma~\ref{Lem:Checks} as follows

\begin{lem}
\label{Lem:Checks}
The stabilizer group of the resource state includes terms $ C[G(t)] \in \mathcal{R} $ for $G \in \Gres$ such that
\begin{eqnarray}
C[G(t)] &=& \sigmax{G(t)} \prod_{ \substack{\tilde{G}\in \Gres,\\ \tilde{G} \not= G}} \sigmaz{\tilde{G}(t)}^{g^X\cdot \tilde{g}^Z} \label{Eqn:ClusterTerm} \\
&& \times  \prod_{j \in |g^X|} \sigmaz{X_j(t)} \prod_{j \in |g^Z|} \sigmaz{Z_j(t)}, \nonumber
\end{eqnarray}
where $(g^X \, g^Z)^T = v(G)$ and $(\tilde{g}^X \, \tilde{g}^Z)^T = v(\tilde{G})$. 
\end{lem}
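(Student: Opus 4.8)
The plan is to recognize that $C[G(t)]$ is nothing other than the image of the ancilla stabilizer generator $\sigmax{G(t)} \in \mathcal{A}$ under conjugation by the entangling unitary $U^A$. Since $\sigmax{G(t)}$, acting as the identity on the channel, is a stabilizer of $\chan \otimes \mathcal{A}$, the operator $U^A \sigmax{G(t)} U^{A\dagger}$ is by definition a stabilizer of $\mathcal{R} = U^A\left(\chan \otimes \mathcal{A}\right)$. It therefore suffices to evaluate this conjugation and confirm that it reproduces Eqn.~(\ref{Eqn:ClusterTerm}).

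To carry out the computation I would decompose $U^A = V \prod_{\tilde{G},\, t'} U[\tilde{G}(t')]$ exactly as in Def.~\ref{Def:Resource} and exploit that every factor is a controlled-phase gate. Consequently all factors mutually commute, and each commutes with $\sigmax{G(t)}$ unless it acts on the ancilla qubit $G(t)$ itself. Because the coordinates $G(t)$ index the ancillae uniquely, the only factors acting nontrivially are (i) the single block $U[G(t)]$, which couples $G(t)$ to the channel qubits $X_j(t)$ for $j \in \supp{g^X}$ and $Z_j(t)$ for $j \in \supp{g^Z}$, and (ii) the factors of $V(t)$ of the form $U^Z[G(t), H(t)]$ coupling $G(t)$ to the other ancillae $H(t)$ at the same time interval. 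Applying Eqn.~(\ref{Eqn:CZPauliX}), which sends $\sigmax{G(t)} \mapsto \sigmax{G(t)}\sigmaz{\nu}$ for each target $\nu$, to every such gate produces precisely the three products of Pauli-$Z$ operators on $X_j(t)$, $Z_j(t)$, and the neighbouring ancillae $\tilde{G}(t)$ that appear in the claimed expression.

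The one point requiring care is the exponent on the ancilla--ancilla terms. In $V(t)$ the product runs over \emph{unordered} pairs $\{G,H\}$, so I must check that $g^X \cdot h^Z$ is symmetric under exchanging $G$ and $H$; otherwise the factor picked up when conjugating $\sigmax{G(t)}$ would be ambiguous. This symmetry follows because $\Gres$ generates the Abelian group $\Gch$: any $G, H \in \Gres$ commute, so $\Upsilon(g,h) = g^X \cdot h^Z + g^Z \cdot h^X = 0$, giving $g^X \cdot h^Z = h^X \cdot g^Z$. Hence each unordered pair $\{G,H\}$ contributes the single well-defined factor $\sigmaz{H(t)}^{g^X \cdot h^Z}$, matching the product over $\tilde{G} \neq G$ in Eqn.~(\ref{Eqn:ClusterTerm}). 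Collecting the three contributions then yields $U^A \sigmax{G(t)} U^{A\dagger} = C[G(t)]$, completing the argument. The main obstacle is simply this bookkeeping of the $V$-term exponents and confirming no double counting occurs across the pair product; everything else is a direct and routine application of the controlled-phase conjugation identity.
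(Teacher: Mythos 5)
Your proposal is correct and follows essentially the same route as the paper's own (very terse) proof: identify $C[G(t)] = U^A \sigmax{G(t)} U^{A\dagger}$ and evaluate the conjugation using the controlled-phase identity of Eqn.~(\ref{Eqn:CZPauliX}). Your added check that $g^X \cdot h^Z = h^X \cdot g^Z$ (from commutativity of elements of the Abelian group $\Gch$) is a worthwhile detail the paper leaves implicit, since it is what makes the unordered-pair exponent in $V(t)$ well defined.
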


\begin{proof}
By Def.~\ref{Def:Resource} we have $\mathcal{R} = U^A (\chan \otimes \mathcal{A})$ and $ \sigmax{G(t)} \in \chan \otimes \mathcal{A}$ since $ \sigmax{G(t)} \in \mathcal{A}$. We obtain Eqn.~(\ref{Eqn:ClusterTerm}) using that $ C[G(t)]  = U^A \sigmax{G(t)} {U^A}^\dagger $ which follows from $U^A$ given explicitly in Eqn.~(\ref{Eqn:UA}) of Def.~\ref{Def:Resource}.
\end{proof}

We are also interested in the measurement outcomes of terms $ P \in \Gch $ of the input data that are not necessarily included $ \Gres $. It will be helpful to define the subset $\xi(P) \subseteq \Gres$ such that $ P = \prod_{G \in \xi(P)} G $. The subset $\xi(P)$ must exist for any $P \in \Gch$ by the definition of $\Gres$, i.e., $\Gres$ is a generating set of $\Gch$.

\begin{cor}
The term $C[\xi(P)(t)] \equiv \prod_{G \in \xi(P)} C[G(t)] \in \mathcal{R}$ is such that
\begin{eqnarray}
C[\xi(P)(t)] &=& \prod_{\tilde{G}\in \Gres}  \sigmaz{\tilde{G}(t)}^{ p^X \cdot \tilde{g}^Z } \label{Eqn:Cap}  \\
&& \times \prod_{G\in\xi(P)}\sigmax{G(t)} \sigmaz{G(t)}^{ g^X\cdot g^Z}  \nonumber \\
&& \times \prod_{j\in\supp{{p^X}}} \sigmaz{X_j(t)} \prod_{j\in\supp{{p^Z}}} \sigmaz{Z_j(t)}. \nonumber
\end{eqnarray}
where we write $\left({p^{X}} \, {p^Z} \right)^T = v\left(P \right)$.
\end{cor}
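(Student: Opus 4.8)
The plan is to prove this Corollary by direct computation: substitute the expression for each $C[G(t)]$ from Lemma~\ref{Lem:Checks} into the defining product $C[\xi(P)(t)] = \prod_{G\in\xi(P)}C[G(t)]$ and collect the Pauli factors qubit by qubit. First I would observe that all the $C[G(t)]$ are elements of the Abelian stabilizer group $\mathcal{R}$, so they mutually commute and their product is an unambiguous Pauli operator; phases I track with the same convention used elsewhere in the paper, so that the $\sigmax{G(t)}$ and $\sigmaz{G(t)}$ contributions on a common ancilla are kept as explicit separate factors. The only algebraic input needed is that $P = \prod_{G\in\xi(P)}G$ implies, in the symplectic vector representation, $p^X = \sum_{G\in\xi(P)}g^X$ and $p^Z = \sum_{G\in\xi(P)}g^Z$ modulo $2$.

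Next I would handle the channel qubits. Each $C[G(t)]$ contributes $\sigmaz{X_j(t)}$ precisely for $j\in\supp{g^X}$ and $\sigmaz{Z_j(t)}$ precisely for $j\in\supp{g^Z}$. Collecting over $G\in\xi(P)$, the exponent of $\sigmaz{X_j(t)}$ is $\sum_{G\in\xi(P)}g^X_j = p^X_j \bmod 2$, and similarly the exponent of $\sigmaz{Z_j(t)}$ is $p^Z_j$. This immediately reproduces the last line, $\prod_{j\in\supp{p^X}}\sigmaz{X_j(t)}\prod_{j\in\supp{p^Z}}\sigmaz{Z_j(t)}$.

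Then I would collect the ancilla factors at a fixed site $\tilde{G}(t)$. The Pauli-X part arises only from $C[\tilde{G}(t)]$ itself, giving the factor $\prod_{G\in\xi(P)}\sigmax{G(t)}$ across distinct ancillae. The Pauli-Z exponent on $\tilde{G}(t)$ is assembled from the term $\sigmaz{\tilde{G}(t)}^{g^X\cdot\tilde{g}^Z}$ present in $C[G(t)]$ for every $G\in\xi(P)$ with $G\neq\tilde{G}$, so it equals $\sum_{G\in\xi(P),\,G\neq\tilde{G}}g^X\cdot\tilde{g}^Z$. When $\tilde{G}\notin\xi(P)$ the restriction is vacuous and this reduces to $p^X\cdot\tilde{g}^Z$; when $\tilde{G}\in\xi(P)$ the self-term $G=\tilde{G}$ is omitted, yielding $(p^X-\tilde{g}^X)\cdot\tilde{g}^Z = p^X\cdot\tilde{g}^Z + \tilde{g}^X\cdot\tilde{g}^Z \bmod 2$.

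Finally I would reassemble the pieces. The uniform contribution $p^X\cdot\tilde{g}^Z$, present for every $\tilde{G}\in\Gres$, is exactly the first line $\prod_{\tilde{G}\in\Gres}\sigmaz{\tilde{G}(t)}^{p^X\cdot\tilde{g}^Z}$, while the additional $\tilde{g}^X\cdot\tilde{g}^Z$ that appears only for $\tilde{G}\in\xi(P)$ combines with the $\sigmax{\tilde{G}(t)}$ on that same ancilla to give the second-line factor $\sigmax{G(t)}\sigmaz{G(t)}^{g^X\cdot g^Z}$. The main and essentially only obstacle is this diagonal bookkeeping: correctly excluding the self-term $G=\tilde{G}$ and recognising that the mod-$2$ identity $(p^X-g^X)\cdot g^Z \equiv p^X\cdot g^Z + g^X\cdot g^Z$ is what splits the ancilla exponent into the two advertised contributions. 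Everything else is a routine collection of commuting Pauli factors, and the result follows once these exponents are matched against Eqn.~(\ref{Eqn:Cap}).
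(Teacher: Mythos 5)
Your proof is correct and follows essentially the same route as the paper: substitute the expression for $C[G(t)]$ from Lemma~2, collect the channel-qubit factors using $p^X=\sum_{G\in\xi(P)}g^X$ and $p^Z=\sum_{G\in\xi(P)}g^Z$ modulo~2, and handle the ancilla Pauli-Z exponents by converting the restricted sum over $\tilde{G}\neq G$ into an unrestricted one plus the diagonal term $g^X\cdot g^Z$. Your identity $(p^X-\tilde{g}^X)\cdot\tilde{g}^Z\equiv p^X\cdot\tilde{g}^Z+\tilde{g}^X\cdot\tilde{g}^Z$ is exactly the paper's manipulation viewed at the level of exponents on a fixed ancilla rather than at the level of products.
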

The above corollary is obtained as follows. We have 
\begin{eqnarray}
C[\xi(P)(t)] &=& \prod_{G\in\xi(P)}  \sigmax{G(t)}  \label{Eqn:CorollaryStep}  \\
&& \times 
\prod_{G\in\xi(P)} \left(
\prod_{\substack{\tilde{G}\in \Gres,\\ \tilde{G} \not= G}} \sigmaz{\tilde{G}(t)}^{g^X\cdot \tilde{g}^Z} \right)
\nonumber \\
&& \times \prod_{j\in\supp{{p^X}}} \sigmaz{X_j(t)} \prod_{j\in\supp{{p^Z}}} \sigmaz{Z_j(t)}, \nonumber
\end{eqnarray}
using that $C[\xi(P)(t)] = \prod_{G\in \xi(P)} C[G(t)] $ and the expression for $C[G(t)]$ given in Eqn.~(\ref{Eqn:ClusterTerm}) in Lemma~\ref{Lem:Checks}. We obtain the above expression for $C[\xi(P)(t)]$ in Eqn.~(\ref{Eqn:Cap}) then using the following identity together with Eqn.~(\ref{Eqn:CorollaryStep})
\begin{eqnarray}
&& \prod_{G\in\xi(P)} \left( \prod_{\substack{\tilde{G}\in \Gres,\\ \tilde{G} \not= G}} \sigmaz{\tilde{G}(t)}^{g^X\cdot \tilde{g}^Z} \right) \nonumber \\
&& =  \prod_{G\in\xi(P)} \left(  \sigmaz{G(t)}^{g^X\cdot g^Z} \prod_{\tilde{G}\in \Gres} \sigmaz{\tilde{G}(t)}^{g^X\cdot \tilde{g}^Z} \right) \nonumber \\
&& = \prod_{\tilde{G}\in \Gres}  \sigmaz{\tilde{G}(t)}^{ \tilde{g}^Z \cdot \sum_{G \in \xi(P)}g^X} \prod_{G\in\xi(P)} \sigmaz{G(t)}^{ g^X\cdot g^Z}  \nonumber \\
&& = \prod_{\tilde{G}\in \Gres}  \sigmaz{\tilde{G}(t)}^{ p^X \cdot \tilde{g}^Z } \prod_{G\in\xi(P)} \sigmaz{G(t)}^{ g^X\cdot g^Z} . \nonumber
\end{eqnarray}

We are now in a position to prove Theorem~\ref{Thm:OutputCode} and \ref{Thm:ChannelStabilizers}.

\begin{proof1}
We deal with Eqns.~(\ref{Eqn:OutputStabilizer}) and~(\ref{Eqn:OutputLogical}) separately before finally examining elements of $\cent(\Gin) \backslash \Gin$ that are measured by the foliated system. We begin by determining $\Gout$. 
\begin{proof1a}
For any element of $ P \in \Gin$ with $(p^X \, p^Z)^T = v(P)$ we have a stabilizer of the initial state $ I_P \in \mathcal{I} \otimes \mathcal{A} $ of the form Eqn.~(\ref{Eqn:IinR}) such that we have $R_P \in \mathcal{R}$ defined according to Eqn.~(\ref{Eqn:PinR}) by Lemma~\ref{Lem:Channel}. Then, provided $P \in \cent(\Gch)$ we have that $\Upsilon(p,g) = 0$ for all $g = v(G)$ with $G \in \Gres $. Therefore, we have
\begin{eqnarray}
R_P &=&  \prod_{j \in \supp{p^X}}\Sigma^X_j(D) \sigmax{Z_j(D+1)}    \\
&& \times \prod_{j \in \supp{p^Z}}\Sigma^Z_j(D+1)\sigmaz{Z_j(D+1)}, \nonumber
\end{eqnarray}
for $P \in \Gin \cap \cent(\Gch)$. Given that $\Sigma^X_j(D),\, \Sigma^Z_j(D+1) \in \mathcal{M}$ we have that 
\begin{equation}
\prod_{j \in \supp{p^X}} \sigmax{Z_j(D+1)}  \prod_{j \in \supp{p^Z}}\sigmaz{Z_j(D+1)} \in \Gout, \label{Eqn:GoutPT1}
\end{equation}
up to the measurement outcomes inferred from $\mathcal{M}$.

We next show elements $ P \in \Gch $ are elements of $\Gout$. We note that $ P^2 = 1 \in \Gin$. Likewise, $I_P^2 \in \mathcal{I} \otimes \mathcal{A}$ by unitarity of $U^C$. Therefore, by Lemma~\ref{Lem:Channel} we have  $R'_P(t)R_P \in \mathcal{R}$. We write this explicitly
\begin{eqnarray}
R'_P(t) R_P  &=&  \prod_{G \in \Gres}  \sigmaz{G(t)}^{p^X \cdot g^Z}  \label{Eqn:RdashR}  \\ 
&& \times  \prod_{j \in \supp{p^X}}  \sigmaz{X_j(t)}  \prod_{j \in \supp{p^Z}} \sigmaz{Z_j(t)}\nonumber \\
&& \times \prod_{j \in \supp{p^X}}  \Sigma^X_j(t) \Sigma^X_j(D) \sigmax{Z_j(D+1)}  \nonumber  \\
&& \times \prod_{j \in \supp{p^Z}} \Sigma^Z_j(t) \Sigma^Z_j(D+1)\sigmaz{Z_j(D+1)}, \nonumber 
\end{eqnarray}
where $\Upsilon(p,g) = 0$ since $\Gch$ is Abelian. Now, given that $P \in \Gch$ there exists some $\xi(P)$ such that we have $C[\xi(P)(t)] \in \mathcal{R}$ as shown in Eqn.~(\ref{Eqn:Cap}).

The product of Eqn.~(\ref{Eqn:Cap}) and~(\ref{Eqn:RdashR}) then gives
\begin{eqnarray}
C[\xi(P)(t)] R'_P(t) R_P  &=& \prod_{G \in \xi(P)}  \sigmax{G(t)} \sigmaz{G(t)}^{g^X \cdot g^Z} \nonumber  \\
&&  \times \prod_{j \in \supp{p^X}}  \Sigma^X_j(t) \Sigma^X_j(D)   \label{Eqn:CorrelationSurfaceDef} \\
&& \times  \prod_{j \in \supp{p^Z}} \Sigma^Z_j(t) \Sigma^Z_j(D+1) \nonumber \\
&&  \times \prod_{j \in \supp{p^X}}  \sigmax{Z_j(D+1)}  \nonumber  \\
&& \times  \prod_{j \in \supp{p^Z}} \sigmaz{Z_j(D+1)}. \nonumber 
\end{eqnarray}
Since $M[G(t)] = \sigmax{G(t)} \sigmaz{G(t)}^{g^X\cdot g^Z} \in \mathcal{M}^A$ and $\Sigma^X_j(t),\Sigma^Z_j(t) \in \mathcal{M}^C$ we find that their values are inferred from $\mathcal{M}$. We therefore find the term
\begin{equation}
\prod_{j \in \supp{p^X}} \sigmax{Z_j(D+1)}  \prod_{j \in \supp{p^Z}} \sigmaz{Z_j(D+1)} \in \Gout, \label{Eqn:GoutPT2}
\end{equation}
for $P \in \Gch$ up to the measurement outcomes of $\mathcal{M}$. The results of Eqn.~(\ref{Eqn:GoutPT1}) and Eqn.~(\ref{Eqn:GoutPT2}) show that $\mathcal{G}_{\text{ch.}} \cup [\mathcal{G}_{\text{in}} \cap \mathcal{C}(\mathcal{G}_{\text{ch}})] \subseteq \mathcal{G}_{\text{out}}$. A similar argument can be made to show $\mathcal{G}_{\text{out}} \subseteq \mathcal{G}_{\text{ch.}} \cup [\mathcal{G}_{\text{in}} \cap \mathcal{C}(\mathcal{G}_{\text{ch.}})]$, thus verifying Eq.~(\ref{Eqn:OutputStabilizer}).
\end{proof1a}

\begin{proof1b}
We now turn to the logical operators as determined by $\mathcal{L}_{\text{out}} = \cent(\Gout) \backslash \Gout$. We require that 
\begin{equation}
P \in \cent(\Gin) \backslash \Gin, \label{Eqn:LoutPT1}
\end{equation} 
such that $R_P \in \cent(\mathcal{R}) \backslash \mathcal{R}$ by Lemma~\ref{Lem:Channel}.

We also require that operator $R_P$ in Eqn.~(\ref{Eqn:PinR}) is such that $P \in \cent(\Gch)$ such that $\Upsilon(p,g) = 0$. Otherwise, $R_P \not\in \cent(\mathcal{M})$ and is therefore not a logical operator of $\mathcal{F}$ by the definition of a subsystem code. 

Finally, elements $R \in \Gch $ are elements of $\Gout$ as we showed above, and are thus not logical operators. We therefore see that
\begin{equation}
P \in \cent(\Gch) \backslash \Gch. \label{Eqn:LoutPT2}
\end{equation}
Combining Eqns.~(\ref{Eqn:LoutPT1}) and~(\ref{Eqn:LoutPT2}) verify Eqn.~(\ref{Eqn:OutputLogical}).
\end{proof1b}

We also have that $P \in \left(\cent(\Gin) \backslash \Gin\right) \cap \Gch $   are elements of $\mathcal{M}$, and are therefore measured under the foliation process. We see this by considering $R'_P(t)$ as in Eqn.~(\ref{Eqn:PinRt}). For elements $P \in \Gch $ there exists a $\xi(P)$ such that $R'_P(t) C[\xi(P)(t)] \in \mathcal{M}$. 
\end{proof1}

We require a representative operator $R_Q \in \left( \cent(\mathcal{R}) \backslash \mathcal{R} \right) \cap \cent(\mathcal{M}) $ of the form of Eqn.~(\ref{Eqn:PinR}) to propagate the logical information to the output state. It is worthwhile writing this explicitly as its value needs to be inferred from $\mathcal{M}$ at the point of readout. In some cases, it may be possible to choose
\begin{eqnarray}
R_Q &=& \prod_{j \in |q^X|} \Sigma_j^X(D) \sigmax{Z_j(D+1)} \\ 
&& \prod_{j \in |q^Z|} \Sigma_j^Z(D+1) \sigmaz{Z_j(D+1)}, \nonumber
 \end{eqnarray}
where, by definition, $\Upsilon(q,g) = 0$ for all $G \in \Gres$ with $(g^X \, g^Z)^T =  v(G)$. Since $\Sigma_j^X(D),\, \Sigma_j^Z(D+1) \in \mathcal{M} $ we have 
\begin{equation}
R_Q \sim \prod_{j \in |q^X|} \sigmax{Z_j(D+1)} \prod_{j \in |q^Z|} \sigmaz{Z_j(D+1)} ,
\end{equation} 
supported on the output system. Sometimes, however, this operator is not suitable because, perhaps, some of the qubits that support $R_Q$ are not available due to loss, or because we require the evaluation of an alternative representative at the output system for later information processing. In which case, we are free to multiply $R_Q$ by stabilizer operators to change its support.

\begin{proof2}
We next verify the elements of the stabilizer group. We consider the term $R'_P(t) $ and $R'_P(t-1)$ in Eqn.~(\ref{Eqn:PinRt}) where $P \in \Gres$ such that $\Upsilon(p,g) = 0$. We take the product of the two terms to give
\begin{eqnarray}
R'_p(t)R'_P(t  -  1) &=& \prod_{G \in \Gres}  \left( \sigmaz{G(t  - 1)} \sigmaz{G(t)}  \right)^{p^X \cdot g^Z} \nonumber  \\
&& \times  \prod_{j \in \supp{p^X}} \Sigma^X_j(t  - 1) \Sigma^X_j(t) \label{Eqn:RdashRdashdash} \\
&& \times  \prod_{j \in \supp{p^Z}}\Sigma^Z_j(t  - 1) \Sigma^Z_j(t) \nonumber \\
&& \times  \prod_{j \in \supp{p^X}}  \sigmaz{X_j(t  - 1)} \sigmaz{X_j(t)} \nonumber \\
&& \times  \prod_{j \in \supp{p^Z}}\sigmaz{Z_j(t  -  1)} \sigmaz{Z_j(t)}, \nonumber 
\end{eqnarray}
where $R'_P(t) R'_P(t-1) \in \mathcal{R}$. The product of this term with $C[P(t)] C[P(t-1)] \in \mathcal{R}$ where $C[P(t)]$ is defined in Eqn.~(\ref{Eqn:ClusterTerm}) gives $S_{\text{bulk}}[G(t)] \in \mathcal{R} \cap \mathcal{M}$ of Eqn.~(\ref{Eqn:StabilizerCell}).

We finally show that $S_{\text{bdry.}}[G(t)]$ of Eqn.~(\ref{Eqn:BdryStabilizer}) belongs to $\mathcal{S}$. This is shown by considering again Eqn.~(\ref{Eqn:PinR}) where $P \in \Gin$ such that $R'_P(t) \in \mathcal{R}$. Then, taking the product of $R'_P(t)$ and $C[\xi(P)(t)] \in \mathcal{R}$, as defined in Eqn.~(\ref{Eqn:Cap}), gives the desired operator which is included in $\mathcal{M}$.
\end{proof2}

\section{Foliating subsystem codes}
\label{App:SubsystemFoliation}

In the main text we focused on the foliation of stablizer codes. In fact, we find that our method for foliation extends to certain classes of subsystem codes as well with minor modifications to the scheme we have given above.

We consider a foliated channel where $\Gres$ is a non-Abelian generating set for subsystem code $\Gch$. The input code $\Gin$ may also be a subsystem code. However, we will not be interested in its gauge degrees of freedom, only its logical operators and its stabilizers. As such, without loss of generality, we will continue to denote the stabilizers of this system as $\Gin$ as before to maintain consistency with the theorems given above. The gauge and logical degrees of freedom can both be regarded as logical operators. This simplification allows us to keep our definition of the channel system, Def.~\ref{Def:Channel}, unchanged in our generalization to subsystem codes.

Further, we keep the ancilla system and the measurement pattern the same following the prescription set by $\Gres$. We index elements of the ancilla system with labels $G(t)$ where $G$ denotes an element of the non-Abelian generating set $\Gres$ and $t$ denotes a time interval. We then define elements of the stablizer group of the ancilla system such that we have $ \sigmax{G(t)} \in \mathcal{A}$ for all $t$ and gauge generators $G \in \Gres$. Likewise, we keep our definition of the measurement pattern, Def.~\ref{Def:Measure}, where we use only type-I foliated qubits in the channel system\footnote{Though we remark that generalizing to make use of type-II foliated qubits is straight forward.}. Explicitly, we measure all the qubits of the channel system in the Pauli-X basis. Ancilla qubits are measured in the basis $\sigmax{G(t)} \sigmaz{G(t)}^{g^X \cdot g^Z} \in \mathcal{M}$ where $(g^X \, g^Z)^T = v(G)$ for all $G\in\Gres$.

We modify the definition of the resource state. We modify Def.~\ref{Def:Resource} such that $V = 1$. We give our motivation for this choice shortly. We find that this modification is suitable for a large class of subsystem codes which, among others, includes CSS subsystem codes such as the Bacon-Shor code~\cite{Bacon06}, the subsystem surface code~\cite{Bravyi13}, the gauge color code~\cite{Bombin15} and variations of these models~\cite{Bacon06a, Bravyi11, Bravyi15, OConner16, Jones16}. Written explicitly, for a given $\Gres$ we have	
\begin{equation}
	\mathcal{R} = U^A  \left(  \chan \otimes \mathcal{A} \right), 
	\end{equation}
	where the ancilla system is in the product state
	 \begin{equation}
	\mathcal{A}  = \left\{  \sigmax{G(t)} : \forall t,\, G \in \mathcal{G}_\mathcal{R} \right\},
	 \end{equation}
and the entangling unitary $U^A$ is given by
	\begin{equation}
	U^A =  V \prod_{G\in\Gres, \, t}  U[G(t)] , \label{Eqn:UAsubsystem}
	\end{equation} 
	where now $V = 1$ and
	\begin{equation}
	U[G(t)] = \prod_{j \in \supp{g^X}} U^Z[X_j(t), G(t)] \prod_{j \in \supp{g^Z}} U^Z[Z_j(t), G(t)],
	\end{equation}
with $(g^X \, g^Z)^T = v(G)$ for each $G\in \Gres$.

Using the definitions given above, we state some facts about elements of $\mathcal{F}$ and their inclusion in $\mathcal{R} $ and $\mathcal{M}$ without proof. Instead, we only remark that the following statements are proven using the methodology given above where $\Gch$ is replaced with a non-Abelian group. To approach this discussion, we consider the following operators for arbitrary Pauli operators $P \in \cent(\Gch)$ with $(p^X \, p^Z)^T = v(P)$. We consider
\begin{eqnarray}
R'_P(t) &=& \prod_{j\in |p^X|} \Sigma^X_j(t)\sigmaz{X_j(t)} \prod_{j\in |p^Z|} \Sigma^Z_j(t) \sigmaz{Z_j(t)} \nonumber \\
&& \times  \prod_{G \in \Gres} \sigmaz{G(t)}^{p^X \cdot g^Z}  \prod_{ t' < t} \sigmaz{G(t')}^{\Upsilon (p ,g)} \nonumber \\
\end{eqnarray}
and
\begin{eqnarray}
R_P &=& \prod_{j\in |p^X|} \Sigma^X_j(D)\sigmax{Z_j(D+1)} \nonumber \\ 
&& \times \prod_{j\in |p^Z|} \Sigma^Z_j(D+1) \sigmaz{Z_j(D+1)} \nonumber \\
&& \times \prod_{\substack{G \in \Gres, \\ t\not=D+1}}  \sigmaz{G(t)}^{\Upsilon (p ,g)}  \label{Eqn:FullTube}
\end{eqnarray}
While we have written the term explicitly here, for $P \in \cent  (\Gch)$ we have $\Upsilon(p,g) = 0$ for all $g = v(G)$ and $G \in \Gres$ by the definition of a subsystem code. We therefore neglect terms with exponents of $\Upsilon(p,q)$ from the above two equations hereon.
We also consider the operator
\begin{eqnarray}
C[\xi(P)(t)] &=& \prod_{G \in \xi(P)} C[G(t)] \\
&=& \prod_{G \in \xi(P)} \sigmax{G(t)} \sigmaz{G(t)}^{g^X\cdot g^Z} \nonumber \\
&& \times \prod_{j\in |p^X|} \sigmaz{X_j(t)} \prod_{j\in |p^Z|} \sigmaz{Z_j(t)}  \nonumber
\end{eqnarray}
where we remember that $\xi(P) $ denotes a subset of elements of $\Gres$ whose product gives $P$ as defined in the main text. A set $\xi(P)$ must exist for elements $P \in \Gch$ by definition.

For $P \in \mathcal{C}(\mathcal{G}_{\text{ch.}} ) \cap \mathcal{G}_{\text{ch.}} \cap \mathcal{G}_{\text{in}} $ the product of these two terms gives us a stabilizer element of $\mathcal{R}$
\begin{eqnarray}
C[\xi(P)(t)] R'_P(t) &=& \prod_{\tilde{G} \in \xi(P)} \sigmax{\tilde{G}(t)} \sigmaz{\tilde{G}(t)}^{\tilde{g}^X\cdot \tilde{g}^Z} \nonumber \\
 && \times  \prod_{j\in |p^X|} \Sigma^X_j(t) \prod_{j\in |p^Z|} \Sigma^Z_j(t) \label{Eqn:SubsystemStabilizer}  \\
&& 
\times \prod_{G \in \Gres} \sigmaz{G(t)}^{p^X \cdot g^Z} . \nonumber
\end{eqnarray}
Likewise, for $P \in \cent(\Gch) \cap \Gch \backslash \Gin $ we have stabilizer generators
\begin{equation}
C[\xi(P)(t-1)]  R'_P(t-1) C[\xi(P)(t)]   R'_P(t) \in \mathcal{R}.
\end{equation}

\begin{figure}
\includegraphics{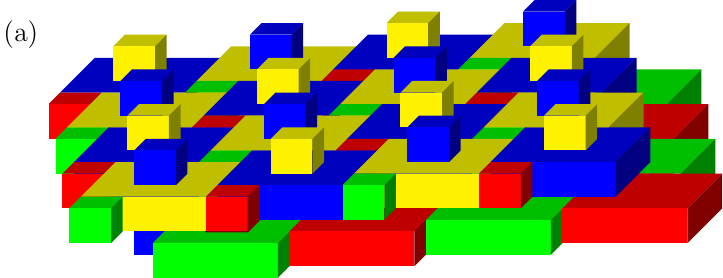}
\includegraphics[scale=0.6]{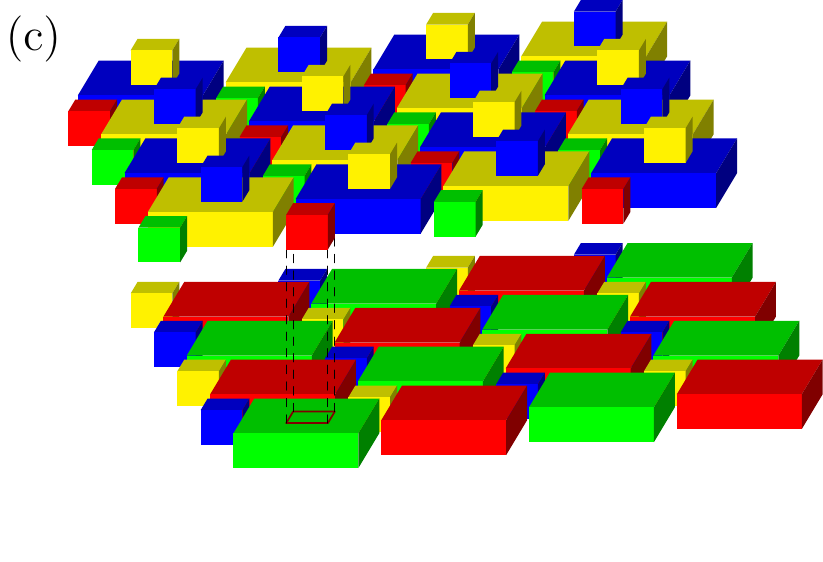}\includegraphics[scale=0.5]{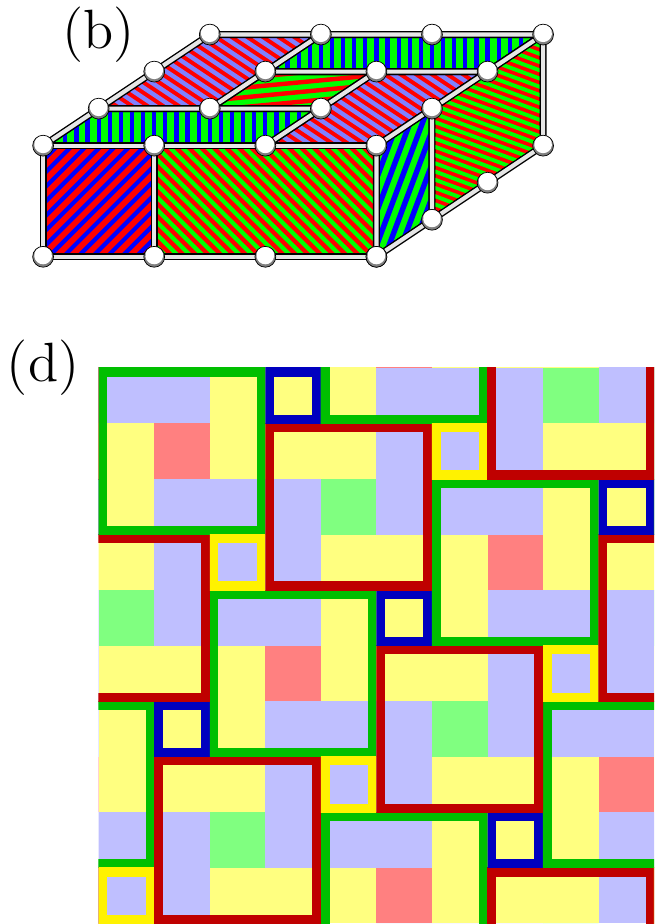}
\caption{\label{Fig:GCC} The gauge color code lattice arranged with the qubits lying on the vertices of a cubic lattice with four-colorable cells. (a)~The stabilizers are eight and thirty-two body terms living on the cuboidal cells of the lattice. (b)~The faces of a yellow cell whose faces are colored with a pair of colors opposite to the colors of the two cells that share the face. (c)~The cells are separated to reveal the structure of the lattice more clearly. The gauge terms lie on faces where pairs of cells share common support. (d)~A two dimensional representation of how cells lying on the layer above, marked with bold outlines, lie atop the layer below, where the cells are filled with pale colors.}
\end{figure}

In Eqn.~(\ref{Eqn:SubsystemStabilizer}) lies the issue with the foliation of general subsystem codes. In particular, it is not clear in general if $C[\xi(P)(t)] R'_P(t)  $ lies in $ \mathcal{M}$ due to the term 
 on the third line of Eqn.~(\ref{Eqn:SubsystemStabilizer})
 \begin{equation}
\prod_{G \in \Gres} \sigmaz{G(t)}^{p^X \cdot g^Z}  = \sigmaz{G(t)}^{p^X \cdot  \sum_{G \in \Gres}g^Z} .
 \end{equation}
 Since $\sigmaz{G(t)} \not\in \mathcal{M}$ for any $G$ or $t$, we rely on $p^X \cdot  \sum_{G \in \Gres}g^Z = 0$ for all $P \in \cent(\Gres) \cap \Gres$ and $G \in \Gres$.

This issue was easily dealt with in the case of stabilizer codes with inclusion of the $V$ operator. Indeed, in the case where $\Gres$ is Abelian we had that $\xi(P) =  \left\{P \right\}$ for all stabilizers $P \in \Gres $, and $P$ and $Q \in \mathcal{\Gres}$ commuted such that $p^X \cdot q^Z = q^X \cdot p^Z$. We were therefore able to eliminate all the residual Pauli-Z terms in the stabilizer group elements analogous to those in Eqn.~(\ref{Eqn:SubsystemStabilizer}) by simply coupling the ancillas with the controlled-phase gate $U^Z[P(t), Q(t)]$ for all $P,\,Q \in \Gres$ such that $p^X \cdot q^Z  = 1$. However, in the case of subsystem codes, where we use multiple ancilla qubits to infer the value of a stabilizer, it is not clear which ancilla qubits we should couple to nullify the extra Pauli-Z terms. We leave the general solution to this problem to future work.

Provided the Pauli-Z terms of the resource state can be cancelled out we can use the operators in Eqns.~(\ref{Eqn:FullTube}) and~(\ref{Eqn:SubsystemStabilizer}) to infer the stabilizers of the foliated system, as in Theorem~\ref{Thm:ChannelStabilizers}, and we can determine the output stabilizer group as in Theorem.~\ref{Thm:OutputCode}. We finally remark that the logical degrees of freedom, $P \in ( \mathcal{C}(\mathcal{G}_{\text{in}}) \cap \mathcal{C}(\mathcal{G}_{\text{ch.}}) ) \backslash ( \mathcal{G}_{\text{in}} \cup \mathcal{G}_{\text{ch.}})$, propagate through the foliated channel, and elements $P \in ( \mathcal{C}(\mathcal{G}_{\text{in}}) \backslash \mathcal{G}_{\text{in}} ) \cap \mathcal{G}_{\text{ch.}}$ can be measured by the channel using the operator shown in Eqn.~(\ref{Eqn:SubsystemStabilizer}), again, provided $p^X \cdot  \sum_{G \in \Gres}g^Z = 0$. This allows us to generalize Theorem~\ref{Thm:OutputCode} for subsystem codes where $p^X \cdot  \sum_{G \in \Gres}g^Z = 0$ for all $P$.

There are many subsystem codes that can be foliated without choosing a nontrivial $V$ operator. The CSS codes are natural candidates since 
\begin{equation}
 \sigmaz{G(t)}^{p^X \cdot  \sum_{G \in \Gres}g^Z}  =  \sigmaz{G(t)}^{p \cdot  \sum_{G \in \Gres}g} = 1, 
\end{equation}
by the definition of a CSS code. We therefore find that $V=1$ is suitable to learn stabilizer data. We also find that the subsystem color code~\cite{Bombin10a} can be foliated with our prescription.

\begin{figure}[t]
\includegraphics{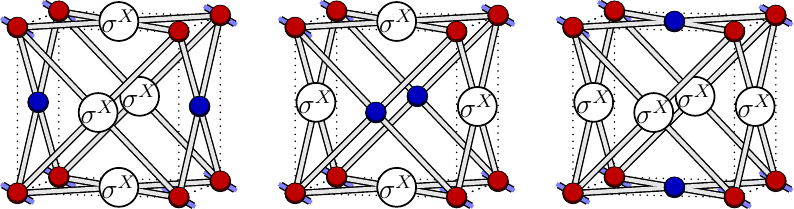}
\caption{The stabilizers of a single cell of the resource state for the gauge color code on red qubits indexed $X(t)$. Measuring the face terms of the gauge color code infers the value of the stabilizer three times. Taking the product of pairs of these cell measurements gives additional stabilizer data for the foliated system. \label{Fig:SingleShot}}
\end{figure}

It is interesting to study foliated subsystem codes as they give us another perspective on fault-tolerant quantum error correction. As an example, we consider single-shot error correction with the gauge color code~\cite{Bombin15, Bombin15a, Brown16a}. We give an alternative representation of the lattice of the gauge color code in Fig.~\ref{Fig:GCC} where the qubits lie on the vertices of a cubic lattice.

The gauge group of the gauge color code has elements $B^X_f  = \prod_{j\in \partial f}X_j $ and $B^Z_f  = \prod_{j\in \partial f}Z_j $ for all the faces of the lattice $f$, where $\partial f$ are the set of qubits that touch the face, and the product of a subset of the faces living on the boundary of a cell give a cell stabilizer. Specifically, the faces surrounding the cube are three-colored, see Fig.~\ref{Fig:GCC}(b). The product of the face terms of all of the faces of one particular color of a cell gives the value of a stabilizer for the corresponding cell. However, as we measure all of the faces, we redundantly learn the value of cell stabilizers three times. This redundancy enables us to predict the locations of measurement errors more reliably as each stabilizer is constrained to give the same value.

We now briefly look at these constraints from the perspective of foliation. In Fig.~\ref{Fig:SingleShot} we show a single cubic cell of the gauge color code on the red qubits indexed $X(t)$. It is readily checked that the operators shown in the figure that are the product of the face operators on two of the different colors of each cell are stabilizers of the foliated system $\mathcal{F} = \mathcal{R} \cup \mathcal{M}$. These stabilizers are unlike the stabilizers discussed in the main text where, if we exclude boundary stabilizers, we require measurements from qubits in different time intervals to learn the value of a stabilizer. We therefore find we have additional stabilizer data for error correction with foliated single-shot codes within each time interval. Indeed, a natural extension of our model of quantum computation is to perform gauge fixing between the three-dimensional color code and the gauge color code~\cite{Paetznick13, Bombin15} via foliated channels. Moreover, unlike many of the examples we have considered where we have assumed a large number of time intervals, due to its single-shot nature~\cite{Bombin15a} a fault-tolerant channel can be achieved with a small constant number of time intervals.

\section{Compressed foliation}\label{App:Compressed}
In this Appendix we discuss compressed foliation. This method includes additional check measurements of elements of $\mathcal{G}_{\mathcal{R}}$ in additon to those defined in $\mathcal{F}$ in the main text. The construction is similar to that described in Sec.~\ref{Sec:CodeFoliation}, but where additional ancillae and entangling unitaries are added, as we now describe.

In the construction of the new resource state, $\mathcal{R}$, one begins with a channel system $\chan$ according to Def.~1. The ancilla system $\mathcal{A}$ is constructed by using two ancillae for each element of $\mathcal{G}_{\mathcal{R}}$ and $t$, giving
\begin{equation}
\mathcal{A} = \left\{ \sigma^X[G(t)] , \sigma^X[G^C(t)] : \forall t, G\in \mathcal{G}_{\mathcal{R}} \right\},
\end{equation} 
where $G(t)$ and $G^C(t)$ label the coordinates of ancillae and the superscript $C$ denotes the additional ancilla for each $G$ and $t$ for the compressed system.

One entangles the ancillae to the channel using the unitary 
\begin{equation}
U^A = V  \prod_{G\in\Gres, \, t}  U[G(t)] U[G^C(t)],
\end{equation} 
where $U[G(t)]$ is defined in the main text and 
\begin{align}
U[G^C(t)] = &\prod_{j \in \supp{g^X}} U^Z[X_j(t), G^C(t)] \nonumber \\
\quad &\prod_{j \in \supp{g^Z}} U^Z[Z_j(t+1), G^C(t)], 
\end{align}
for $(g^X \, g^Z)^T = v(G)$.
Moreover, we update the operator $V$ with $V = \prod_t V(t) V^C(t)$ such that
\begin{equation}
V^C(t) = \prod_{\langle\langle G, H \rangle \rangle } U^Z\left[G^C(t,a) ,H^C(t,a)\right]^{g^X\cdot h^Z}
\end{equation}
where again, the product is over all unordered pairs from $G\neq H \in \Gres$, with $(g^X\,g^Z)^T = v(G)$ and $(h^X\,h^Z)^T = v(H)$, and $V(t)$ is also defined in the main text. 

In the case of CSS codes, compressing the foliation does not lead to any novel channels. In essence, each $G\in \Gres$ is measured twice per time interval. In the non-CSS codes compressed foliation leads to qualitatively different channels. In particular, compressed foliation results in resource states with higher degree, but can result in lower weight stabilizers. With a local basis change, one can always take a CSS code to a non-CSS code~\cite{Wen03, Nussinov09, Brown11}, and foliation of the two can lead to drastically different resource states. One needs to assess which channel is more suitable for a given purpose.


\end{document}